\let\@authorsaddresses\@empty
\titleformat{\section}{\Large\bfseries}{\thesection}{1em}{}
\titleformat{\subsection}{\large\bfseries}{\thesubsection}{1em}{}
\titleformat{\subsubsection}[runin]{\bfseries}{\thesubsubsection}{1em}{}
\pgfplotsset{compat=1.17}
\DeclareMathOperator{\poly}{polylog}
\DeclareMathOperator{\supp}{supp}
\newcommand{\E}{\mathbb{E}}
\newcommand{\mf}[1]{{\mbox{\sc{#1}}}}
\newcommand{\inv}{\mathrm{inv}}
\numberwithin{equation}{section}
\theoremstyle{plain}
\newtheorem{thm}[equation]{Theorem}
\newtheorem{cor}[equation]{Corollary}
\newtheorem{lem}[equation]{Lemma}
\theoremstyle{definition}
\newtheorem{defn}[equation]{Definition}
\theoremstyle{remark}
\newcommand{\whp}{\emph{whp}}
\begin{document}

\title{Encoding Schemes for Parallel In-Place Algorithms}

\author{Chase Hutton}
\affiliation{
    \institution{University of Maryland, College Park}
    \country{United States}
}
\author{Adam Melrod}
\affiliation{
    \institution{University of Maryland, College Park}
    \country{United States}
}

\begin{CCSXML}
 <ccs2012>
   <concept>
       <concept_id>10003752.10003809.10010170.10010171</concept_id>
       <concept_desc>Theory of computation~Shared memory algorithms</concept_desc>
       <concept_significance>500</concept_significance>
       </concept>
 </ccs2012>
\end{CCSXML}

\ccsdesc[500]{Theory of computation~Shared memory algorithms}

\begin{abstract}
    Many parallel algorithms which solve basic problems in computer science use auxiliary space linear in the input to facilitate conflict-free computation. There has been significant work on improving these parallel algorithms to be in-place, that is to use as little auxiliary memory as possible. In this paper, we provide novel in-place algorithms to solve the fundamental problems of merging two sorted sequences, and randomly shuffling a sequence. Both algorithms are work-efficient and have polylogarithmic span. Our algorithms employ encoding techniques which exploit the underlying structure of the input to gain access to more bits, which enables the use of auxiliary data as well as non-in-place methods. The encoding techniques we develop are general. We expect them to be useful in developing in-place algorithms for other problems beyond those already mentioned. To demonstrate this, we outline an additional application to integer sorting. In addition to our theoretical contributions, we implement our merging algorithm, and measure its memory usage and runtime.
\end{abstract}

\maketitle

\section{Introduction} There is a growing interest in devising theoretically efficient parallel algorithms with low memory footprints. Such algorithms enjoy a number of benefits over their non-in-place counterparts including reduced
memory allocation cost, increased cache efficiency, and greater scalability. Recently, theoretically-efficient and practical parallel in-place (PIP) algorithms have been proposed for a wide variety of individual problems including radix sort \cite{10.1145/3323165.3323198}, partition \cite{kuszmaul2020inplaceparallelpartitionalgorithmsusing}, and random permutation \cite{penschuck2023engineeringsharedmemoryparallelshuffling}. In many cases, these algorithms achieve better performance than the best existing alternatives.

Gu et al. generalize the ideas of previous work on PIP into a new framework; they give two new models, which they refer to as the strong PIP model and the relaxed PIP model \cite{2021pip}. In both models, the auxiliary space is measured in the sequential setting. This decouples the analysis of a parallel algorithm's span from its space complexity and reflects the memory required to initialize parallelism in many modern parallel computing environments (e.g. the `busy leaves' property \cite{doi:10.1137/S0097539793259471}). Formally, the strong and relaxed PIP models assume fork-join parallel computations using $O(\log n)$-word auxiliary space in a stack-allocated fashion for inputs of size $O(n)$ when run sequentially. The relaxed PIP model additionally allows the use of $O(n^{1-\epsilon})$ (for fixed $\epsilon > 0$) heap-allocated memory. We overview the strong PIP model in Section \ref{strong PIP}. For a full discussion of the relaxed PIP model, refer to Section 3.2 of \cite{2021pip}.

The strong PIP model is restrictive due to the polylogarithmic auxiliary space requirement. As a consequence, there are only a small number of known non-trivial work-efficient strong PIP algorithms: unstable partition \cite{kuszmaul2020inplaceparallelpartitionalgorithmsusing}, scan, reduce, rotation, and set operations \cite{2021pip}, and constructing implicit search tree layouts \cite{9416144}. The relaxed PIP model, on the other hand, captures the trade off between additional space and span exhibited by many existing PIP algorithms. In particular, the allowance of $O(n^{1-\epsilon})$ auxiliary space enables the possibility of iteratively reducing a problem to a subproblem of sufficiently small size. Gu et al. observe that many problems of interest satisfy this `decomposable property.' In particular, they give a number of work-efficient relaxed PIP algorithms including algorithms for random permutation, merging, filter, and list contraction. They also implement these algorithms. In all cases, their implementations exhibited competitive or even better performance compared to their non-in-place counterparts, along with using less space \cite{2021pip}. Ultimately, the relaxed PIP model is an important step in the direction of obtaining efficient low space parallel algorithms. Still, while performant, the $O(n^{1-\epsilon})$ auxiliary space and $O(n^{\epsilon}\poly n)$ span used by relaxed PIP algorithms can be undesirable.

The primary goal of this paper is to enlarge the set of work-efficient strong PIP algorithms by developing a set of encoding techniques, and applying them to solve a number of canonical problems in algorithm design. The `encoding techniques' we use take advantage of underlying structure within the input in order to `store' additional information. We primarily use two encoding schemes:

\subsubsection{Inversion encoding.} A central idea in this paper is that of \textit{inversion encoding}. This technique is also known as \textit{bit-stealing} \cite{IANMUNRO198666}. The high-level idea is that when our input is taken from a totally ordered universe (e.g. integers), we can divide the input into pairs and exploit the ordering to gain extra bits of information. These techniques functionally allow the use of a sub-linear (but still significant) amount of auxiliary memory, roughly $\Omega(n)$ bits. The main barrier to using inversion encoding is the overhead of reading and writing `stolen' bits. For example, reading and writing an encoded $O(\log n)$-bit value requires $O(\log n)$ work. We provide additional issues that arise when using inversion encoding in Section \ref{encodingdisc}. Despite the limitations of the technique, we show that inversion encoding is a powerful tool in designing in-place parallel algorithms.

\subsubsection{Restorable buffers.} As reading and writing encoded values requires $O(\log n)$ work, an alternative that does not is desirable. One such alternative is a construction we call \emph{restorable buffers}. These buffers use inversion encoding to encode a range of values from the input into a different predetermined part of the input, freeing the initial range to be used as an auxiliary workspace with constant reads and writes. Additionally, when the auxiliary workspace has fulfilled its purpose, the values that originally occupied that portion of the array can be restored. We give a detailed description of this construction in Section \ref{restorable-buffers}.

\subsubsection{Our Contributions.} The main contributions of this paper are work-efficient, polylogarithmic span strong PIP algorithms for merging two sorted sequences and randomly shuffling a sequence. These results are new to the best of our knowledge. We discuss these algorithms in Sections \ref{merge section} and \ref{randperm}. Ultimately, we expect the techniques we have developed to be applicable in the development of parallel in-place algorithms beyond those we discuss in this paper; a further application to integer sorting is discussed with less detail in Section \ref{add-apps}.

We implement our in-place parallel merging algorithm and compare it against the optimized non-in-place merging implementation in the Problem Based Benchmark Suite (PBBS) \cite{10.1145/2312005.2312018}. The
running time comparisons for certain input sizes are shown in Figure \ref{run time comparison} and we provide more details in Section \ref{exp section}. The data illustrates that in addition to having negligible memory usage, our in-place merging algorithm can be competitive with a state of the art implementation. 

% \begin{figure}[ht]
%   \begin{tikzpicture}
% \begin{axis}[
%     width=12cm,
%     height=7cm,
%     ybar,
%     bar width=12pt,
%     ylabel={Time (\(\mu\)s)},
%     ymin=0,                     % Adjust if you have a lower bound > 0
%     ymajorgrids=true,          % horizontal grid lines
%     major x tick style = transparent,
%     enlarge x limits=0.2,
%     symbolic x coords={1M,25M,50M,100M,250M,500M, 1B, 2B},
%     xlabel={Input Size},
%     xtick=data,
%     scaled y ticks = false,
%     legend cell align={left},
%     legend style={
%         at={(0.5,1.03)},
%         anchor=south,
%         legend columns=2,
%         /tikz/every even column/.append style={column sep=0.5cm}
%     },
%     % If you want a log scale on Y:
%     % ymode=log,
% ]

% % PBBS merging data (gray)
% \addplot+[fill=gray, draw=black] coordinates {
%     (1M,  115)      % <-- Replace with PBBS time for 1M
%     (25M, 2240)
%     (50M, 4217)
%     (100M,6233)
%     (250M,16882)
%     (500M,59298)
%     (1B, 63496)
%     (2B,  173050)    % <-- Replace with PBBS time for 2B
% };

% % IP merging data (red)
% \addplot+[fill=red, draw=black] coordinates {
%     (1M,  1683)       % <-- Replace with IP time for 1M
%     (25M, 8179)
%     (50M, 11941)
%     (100M,19741)
%     (250M,57307)
%     (500M,92210)
%     (1B, 193711)
%     (2B,  341541)     % <-- Replace with IP time for 2B
% };

% \legend{PBBS Merging, IP Merging}

% \end{axis}
% \end{tikzpicture}\caption{}\label{}
% \end{figure}

\begin{figure}[ht]
\scalebox{0.8}{%
  \begin{minipage}{\textwidth}
\begin{tikzpicture}
\begin{axis}[
    width=12cm,
    height=7cm,
    ylabel={Time (\(\mu\)s)},
    ymajorgrids=true,
    major x tick style = transparent,
    enlarge x limits=0.2,
    symbolic x coords={1M,25M,50M,100M,250M,500M,1B,2B},
    xlabel={Input Size},
    xtick=data,
    scaled y ticks=false,
    legend cell align={left},
    legend style={
        at={(0.5,1.03)},
        anchor=south,
        legend columns=2,
        /tikz/every even column/.append style={column sep=0.5cm}
    },
]

% PBBS Merging (gray line)
\addplot+[mark=*, very thick, color=gray] coordinates {
    (1M,  115)
    (25M, 2240)
    (50M, 4217)
    (100M, 6233)
    (250M, 16882)
    (500M, 59298)
    (1B,   63496)
    (2B,  173050)
};

% IP Merging (red line)
\addplot+[mark=triangle, very thick, color=red] coordinates {
    (1M,   1683)
    (25M,  8179)
    (50M, 11941)
    (100M, 19741)
    (250M, 57307)
    (500M, 92210)
    (1B,   193711)
    (2B,   341541)
};

\legend{PBBS Merging, IP Merging}

\end{axis}
\end{tikzpicture}

\caption{Running times for our parallel in-place merging (IP Merging) implementation compared to non-in-place implementation from
PBBS \cite{10.1145/2312005.2312018}. Here the IP merging implementation uses blocks of size 4000 and the input is $32$-bit integers. The running times are obtained on a 96-core machine with two-way hyper-threading, and more details, including memory profiling, are presented in Section \ref{exp section}.}
\label{run time comparison}
\end{minipage}%
}
\end{figure}
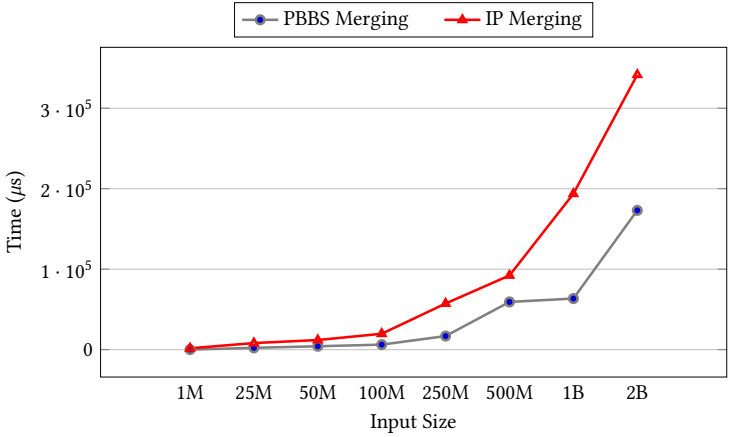

\section{Preliminaries}
    \subsection{Notation} If $A$ is an array of $n$ elements, we use $A[i:j]$ to refer to the contiguous subsequence $(A[i],A[i+1],\dots,A[j-1])$. We denote uniform sampling from a finite set $X$ as $\overset{R}{\gets} X$.
    
    \subsection{Work-Span Model} In this paper, we use the work-span model with binary forking for analyzing parallel algorithms \cite{DBLP:journals/corr/abs-1903-04650}. As such, we assume we have a set of threads with access to a shared memory. Each thread supports the same operations as in the sequential RAM model, in addition to a fork instruction, which when executed, spawns two child threads and suspends the executing (parent) thread. When the two child threads end (through executing the end instruction), the parent thread starts again by first executing a join instruction. An individual thread can allocate a fixed amount of memory private to the allocating thread (refereed to as stack-allocated memory) or shared by all threads (referred to as heap-allocated memory). The \textit{work} of an algorithm is the total number of instructions carried out in the execution of the algorithm, and the \textit{span} is the length of the longest sequence of dependent instructions in the computation. A randomized work-stealing scheduler can execute an algorithm with work $W$ and span $D$ in $W/P + O(D)$ time \whp\footnote{We say $O(f(n))$ \textbf{with high probability (\whp)} to indicate $O(cf(n))$ with probability at least $1 - n^{-c}$ for $c \geq 1$, where $n$ is the input size.} on $P$ processors \cite{doi:10.1137/S0097539793259471}. The binary forking model can be extended with additional instructions to support concurrent reads and writes in an atomic fashion by using the compare-and-swap instruction \cite{10.5555/2385452}. Generally, we only require such instructions when reads and writes on the same memory location potentially overlap in execution. Of the algorithms we present in this paper, only the algorithm for generating a random permutation, given in Section \ref{randperm}, requires the use of atomic reads and writes.

    \subsection{Sequential Stack Space} An important concept in this paper is the notion of sequential stack-allocated memory. Formally, a parallel algorithm on an input of size $n$ uses $O(f(n))$ sequential stack-allocated memory if when scheduled using a single thread, the algorithm uses at most $O(f(n))$ stack-allocated memory.
    
    In the strong PIP model, auxiliary space is measured in the sequential setting, whereas span is analyzed according to the fork-join computation graph. In this context, strong PIP algorithms are able to achieve low span and low auxiliary space simultaneously. For a more complete discussion we refer to Section 3 of \cite{2021pip}.

    \subsection{The Strong PIP Model}\label{strong PIP} The \textit{strong PIP model} assumes a fork-join computation using $O(\log n)$-word stack-allocated space sequentially. An algorithm is a \textit{strong PIP} algorithm if it runs in the strong PIP model and has polylogarthimic span \cite{2021pip}. The algorithms in this paper will be presented through the strong PIP model. However, many of algorithms we propose do not require $O(\log n)$ sequential stack memory and would, with minimal effort, fit into other models of in-place parallel computation such as the in-place PRAM model.

\section{Encoding}\label{encodingdisc}

% \subsection{Assumptions}
We make an important assumption throughout the paper that the elements of the input are taken from a totally ordered universe and that comparing two elements is a constant time operation. The main benefit of this assumption is that we will be able to store information implicitly in our input using inversions. In this section, we describe how this implicit storage works and the encoding techniques used to implement it.

These ideas have been used extensively in the design of in-place algorithms and implicit data structures \cite{bronnimann2004towards, 10.1007/11785293_10, kronrod1969, IANMUNRO198666, SaloweSteiger1985, inplacemerge}. More recently, they were used by \cite{kuszmaul2020inplaceparallelpartitionalgorithmsusing} to provide an in-place parallel algorithm for the partition problem. We refer to the encoding technique we use as \textit{inversion encoding}, which uses the relative order of pairs to encode bits. In the literature, this encoding technique has sometimes been referred to as bit-stealing, but we believe inversion encoding to be a more descriptive name.

\subsection{Inversion Encoding}
To start, we fix a block $B$ of $b$ unique elements. By considering the relative ordering of consecutive pairs of elements in $B$, we can encode $b/2$ bits of additional information. Specifically, we fix the convention that a pair $(x_1,x_2)$ encodes a $0$ if $x_1 < x_2$ and a $1$ otherwise. See Figure \ref{encodingtikz} for an example.

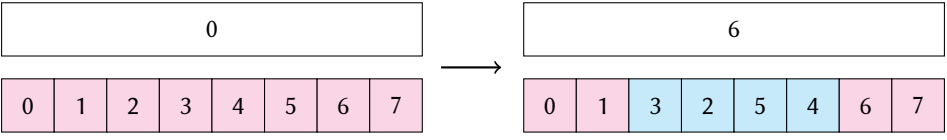
\begin{figure}[t]
\begin{tikzpicture}[
  font=\sffamily,
  % Style every box (node)
  box/.style={
    draw,
    minimum size=0.7cm,
    align=center,
    inner sep=0pt
  }
]

% --- Original array ---
\node[box, fill = magenta!20] (A1) at (0*0.7,0) {0};
\node[box, fill = magenta!20] (A2) at (1*0.7,0) {1};
\node[box, fill = magenta!20] (A3) at (2*0.7,0) {2};
\node[box, fill = magenta!20] (A4) at (3*0.7,0) {3};
\node[box, fill = magenta!20] (A5) at (4*0.7,0) {4};
\node[box, fill = magenta!20] (A6) at (5*0.7,0) {5};
\node[box, fill = magenta!20] (A7) at (6*0.7,0) {6};
\node[box, fill = magenta!20] (A8) at (7*0.7,0) {7};
% --- Original array ---
\node[box, minimum width = 8*0.7 cm]   (B1) at (7*0.7 / 2, 1)  {0};

% --- Arrow indicating operation ---
\draw[->, thick] (8 * 0.7 - 0.1,0.5) -- (8 * 0.7 + .7,0.5);

% --- New array ---
\node[box, fill = magenta!20] (C1) at (0*0.7 + 0.7*8 + 1.35,0) {0};
\node[box, fill = magenta!20] (C2) at (1*0.7 + 0.7*8 + 1.35,0) {1};
\node[box, fill = cyan!20] (C3) at (2*0.7 + 0.7*8 + 1.35,0) {3};
\node[box, fill = cyan!20] (C4) at (3*0.7 + 0.7*8 + 1.35,0) {2};
\node[box, fill = cyan!20] (C5) at (4*0.7 + 0.7*8 + 1.35,0) {5};
\node[box, fill = cyan!20] (C6) at (5*0.7 + 0.7*8 + 1.35,0) {4};
\node[box, fill = magenta!20] (C7) at (6*0.7 + 0.7*8 + 1.35,0) {6};
\node[box, fill = magenta!20] (C8) at (7*0.7 + 0.7*8 + 1.35,0) {7};
% --- Original array ---
\node[box, minimum width = 8*0.7 cm]   (B2) at (8*0.7 + 1 + 8*0.7/2, 1)  {6};

\end{tikzpicture}
\caption{An example for encoding $6$ in $B = [0,1,2,3,4,5,6,7]$.}\label{encodingtikz}
\end{figure}

We can encode and decode bit strings of size $b/2$ (equivalently integers $v$ such that $\lfloor \log_2 v \rfloor = b/2$) in $B$ using simple parallel divide-and-conquer algorithms given by \cite{kuszmaul2020inplaceparallelpartitionalgorithmsusing}. For completeness, we give them below.

{
\SetKwProg{CFn}{Function}{}{}
\setlength{\interspacetitleruled}{0pt}%
\setlength{\algotitleheightrule}{0pt}%
\begin{algorithm2e}[ht]
\DontPrintSemicolon
\SetKwBlock{DoParallel}{do in parallel}{end}
\CFn {\mf{ReadBlock}(B, i, j) \Comment{$j - i$ is assumed even}} { 
    \If{$j - i = 2$}{
        \textbf{return} $0$ if $B[i] < B[j]$ and $1$ otherwise
    }
    \DoParallel{
        $v_l \gets \mf{ReadBlock}(B,i,i+(j-i)/2)$ \\
        $v_r \gets \mf{ReadBlock}(B,i+(j-i)/2 + 1, j)$
    }
    \textbf{return} $v_r \cdot 2^{\frac{j-i}{4}} + v_l$
}
\end{algorithm2e}
}
{
\SetKwProg{CFn}{Function}{}{}
\setlength{\interspacetitleruled}{0pt}%
\setlength{\algotitleheightrule}{0pt}%
\begin{algorithm2e}[ht]
\SetKwFor{ParFor}{parallel for}{do}{endfch}
\DontPrintSemicolon
\CFn {\mf{WriteBlock}(B, i, j, v) \Comment{$j - i$ is assumed even and equal to $2\lfloor \log_2 v \rfloor$}} { 
    \ParFor{$k \gets 0$ to $(j-i)/2$}{
        \If{$k$th bit of $v$ is a $0$}{
            $t \gets B[2k]$ \\
            $B[2k] \gets 
            \min(B[2k],B[2k+1])$ \\
            $B[2k+1] \gets \max(t,B[2k+1])$
        } \Else {
            $t \gets B[2k]$ \\
            $B[2k] \gets 
            \max(B[2k],B[2k+1])$ \\
            $B[2k+1] \gets \min(t,B[2k+1])$
        }
    }
}
\end{algorithm2e}
}

With these functions, we can, in a block of size $b = 2s\log n + 2t$, encode $s$ $(\log n)$-bit numbers and $t$ bits. Reading and writing $B[i:j]$ requires $O(\log |i-j|) = O(\log n)$ work and $O(\log\log |i - j|) = O(\log\log n)$ span. Note that both ReadBlock and WriteBlock require only a constant amount of sequentially allocated stack space and are strong PIP algorithms. 

% We discuss more sophisticated encoding schemes involving permutations of larger sequences in section \ref{morecompencode}. Naturally, these schemes facilitate encoding a larger amount of data at the cost of additional complexity in the read/write functions.

% \subsection{Block encoding} The $\emph{block encoding}$ scheme extends the concept of inversion encoding by partitioning the input into blocks of size $b$ and applying inversion encoding within each block to encode information. With this approach, it is possible to encode $O(n / \log n)$ additional values within the input.
% This approach leverages the fact that by choosing $b = O(\log n)$, it is possible to encode $O(n / \log n)$ additional values within the ordering of consecutive pairs across the entire input sequence. 

\subsection{Encoding Notation} Let $A$ be an array of $n$ unique elements and $b$ a block size. Let $X_b$ denote an array of $b/2$-bit elements encoded in $A$ using inversion encoding. Then $X_b[i]$ represents the $i$th value of $X_b$ encoded in the $i$th block of $A$. Further, reads and writes on $X_b[i]$ should be interpreted as ReadBlock and WriteBlock invocations on the $i$th block of $A$. That is, $X_b[i] \gets j $ is interpreted as WriteBlock$(A,ib, (i+1)b ),j)$ and $t \gets X_b[i]$ is interpreted as $t \gets $ ReadBlock$(A,ib, (i+1)b)$ .

\subsection{Restorable Buffers}
\label{restorable-buffers} A central construction that we will extensively use is what we call a \textit{restorable buffer}. The purpose of this buffer is to obtain additional space that can be treated as `true' auxiliary space. With this auxiliary space, we can use existing non-in-place algorithms. We make the following definition.

\begin{defn}\label{buffer-defn}
    A restorable buffer $B(s,w,l)$, parameterized by an auxiliary length $s$, a word size $w$, and a pointer $l$, is a contiguous sequence of memory starting at $l$ of size $s+2ws$. The first $s$ elements starting at $l$ are called the auxiliary elements, and the last $2ws$ elements are called encoding elements. The first range is called the auxiliary part and the second range is called the encoding part.
\end{defn}

Any buffer $B(s,w,l)$ supports initialization and restoration operations. Initialization encodes the first $w$ bits of elements occupying the auxiliary part into the encoded part, and restoration decodes the encoded part into the first $w$ bits of each element in the auxiliary part. Both of these operations are done in parallel using the read and write block operations of the previous section.

We now discuss an additional functionality we may add to a buffer that will be useful later when we give an algorithm for random permutation.
\begin{defn}\label{simulated-buffer-defn}
    An adjustable buffer is a restorable buffer $B(s,w,l)$ which supports the following additional operations.
    \begin{itemize}
        \item Simulated auxiliary operations: The elements that occupied the auxiliary part of the buffer are always accessible. We may read these values implicitly by decoding the corresponding $w$ bits and reading all but the first $w$ bits of the element occupying that space. We can write these values in the same way.
        \item Encoding element operations: the elements that are being used for encoding can still be read and written without issue. However, writing a value can potentially break the bit currently being encoded. As such, we support write operations, but enforce that after any such operation, the corresponding pair is swapped so that the encoding bit is not destroyed.
    \end{itemize}
\end{defn}

These two operations can face concurrency issues. As such, we enforce that the simulated auxiliary operations can never occur concurrently with writes to encoding elements. Additionally, we enforce that two members of the same encoding pair can never be written to concurrently. Here we remark that adjustable buffers only exist if we have the ability the modify the underlying bit representation of the input.

\subsection{Duplicate Elements}

Thus far, the encoding schemes presented assume that inputs do not contain duplicate elements. To remove this assumption, we may use the construction given in Section 3 of \cite{kuszmaul2020inplaceparallelpartitionalgorithmsusing}. With this construction in a block $B$ of $b$ elements, we can encode at least $b/4$ bits. Handling duplicates does not materially change any of the algorithms we give. As such, we will assume throughout the rest of the paper that all elements of the input are distinct, in order to ease the presentation.

% A buffer consists of two parts, the auxiliary portion and the encoding portion, which are laid out sequentially in memory. The parameters for the construction are the auxiliary size $r$, the auxiliary word size $w$, and the buffer location $\ell$. The buffer supports five operations: Initialize, Restore, InputRead, InputWrite, AuxRead, and AuxWrite. Initialize takes the elements of the input laying in the auxiliary portion, and block writes their values to the encoding portion. Restoring block reads the encoding portion and writes those values to the auxiliary portion. Auxiliary reads and writes function as reading from out-of-place memory, and we simply read or write the appropriate index of the auxiliary section. Input reads and writes are a little more complicated. If we want to read an element of the input \textcolor{red}{in progress ...}

\subsection{Obstacles to Inversion Encoding} There are two mains obstacles to using inversion encoding in addition to cost of reading and writing encoded values. The first obstacle is that if an algorithm modifies elements of the input, then this can destroy the bits encoded in the corresponding pairs. If the input is static, meaning the algorithm never directly modifies it, this obstacle is avoided. Otherwise, a more careful approach is required.

 % we will often be able to get around this issue by storing data we want to preserve in the stack and restoring it later. 

% For non-static inputs, before modifying a pair we can write the contents of the pair to the stack and later restore the encoded bits. We note that in this paper, we only use constant sequential stack-allocated space to handle bit restoration. \textcolor{red}{Where in our algorithms do we use "bit restoration"? If this is to mean e.g. in merge, reading the contents of the blocks to the stack, performing the operations, and then rewriting them to the block, then it is only using constant sequential stack space only if you do not store each bit individually, contrary to what you said.}

The second issue that arises is that an algorithm may need to read the original state of the input. Since encoding values swaps elements in pairs, unless we know the initial state of every pair, it can be impossible to correctly read the original state of the input. We can avoid this obstacle if the input has a known initial structure, e.g. is sorted.

\section{Merging}\label{merge section}

\subsection{Introduction}
Merging is a fundamental primitive in algorithm design. It has been extensively studied in both the sequential and parallel setting \cite{88483,2021pip}. Further, there is a large body of work devoted to the study of in-place merge algorithms \cite{inplacemerge, kronrod1969, 10.1007/11785293_10,SaloweSteiger1985}. To the best of our knowledge, there does not exist a parallel algorithm for in-place merging that is able to achieve both low span and optimal work simultaneously. In this section, we bridge this gap by giving a work-efficient near-optimal span strong PIP algorithm for merging two sorted arrays. Our main result is the following.

\begin{thm}\label{ipmerge}
   Merging two sorted arrays of size $n$ and $m$ (where $n \geq m$) can be done with $O(n)$ work, $O(\log^2 n)$ span whp, and using $O(\log n)$ sequential stack-allocated memory.
\end{thm}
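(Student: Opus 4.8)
The plan is to start from the textbook work-efficient parallel merge and spend all of the effort making it in-place; the work and span targets come essentially for free from the recursion, and the entire difficulty is avoiding an auxiliary output array. Recall the standard divide-and-conquer merge: split the larger array $A$ at its median, binary search (co-rank) for the matching split point in $B$, and recurse on the two halves in parallel. This has depth $O(\log n)$, spends $O(\log n)$ span per level on the dual binary search, and does $O(n)$ total work, giving $O(\log^2 n)$ span. The obstruction to running it in place is the usual one for merging: an element of $A$ may have to move right by up to its rank in $B$, so the total displacement of a naive in-place permutation route can be $\Theta(nm)$, far exceeding our work budget. Thus we cannot follow cycles; we must stage the output through genuine scratch memory.

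Scratch memory is exactly what a restorable buffer (Definition \ref{buffer-defn}) supplies: backing up $s$ elements costs $s(1+2w)$ elements of the (sorted, hence distinct) input, so with $w = \Theta(\log n)$ we can free a block of $\Theta(s)$ true auxiliary cells while locking an $\Theta(s\log n)$-sized encoding region. Since this scratch is necessarily sublinear, we cannot buffer the whole output at once and instead proceed blockwise. I would use co-ranking to cut the output into contiguous blocks $C_1, C_2, \dots$, where $C_i$ is produced by merging a contiguous run $A_i \subseteq A$ with a contiguous run $B_i \subseteq B$ and $|A_i| + |B_i| = |C_i|$; the split points are recomputed on the fly by binary search in $O(\log n)$ work each. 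To emit a block, read $A_i$ and $B_i$, run an ordinary (non-in-place) parallel merge into the restorable buffer in $O(|C_i|)$ work and $O(\log n)$ span, and then relocate the merged block into the memory that $A_i$ and $B_i$ vacated, rotating the buffer forward. Each element is touched a constant number of times, so the total work is $O(n)$.

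To hit $O(\log^2 n)$ span rather than merely polylogarithmic span, the block size and the number of blocks must be balanced so that the blocks can be processed with enough parallelism: a purely left-to-right pipeline of $g$ blocks costs $\Theta(g\log n)$ span, which would force $g = O(\log n)$ and hence a buffer too large to carve out, so instead I would exploit the median-split recursion to make groups of blocks operate on disjoint memory and merge them concurrently, keeping each block's internal merge at $O(\log n)$ span. The restorable buffer, together with its $O(\log\log n)$-span read/write primitives, contributes only $O(\log n)$ to the stack and nothing to the asymptotic work, so the $O(\log n)$ sequential stack bound is maintained throughout.

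I expect the main obstacle to be the in-place relocation step: moving each merged block into the freed source region, in parallel and without write conflicts, while simultaneously (i) never overwriting a source run $A_j$ or $B_j$ belonging to a block not yet emitted, and (ii) correctly restoring the original input values that the buffer temporarily displaced, since the buffer both overwrites its auxiliary cells and permutes its encoding pairs, and every one of those elements is itself part of the final merged sequence. The sorted structure of the two inputs is what makes (ii) tractable, as the original contents of any corrupted range are determined by its endpoints; verifying that the relocation schedule respects (i) across all blocks concurrently, and that buffer initialization and restoration can be interleaved with emission without races, is where the real care is needed.
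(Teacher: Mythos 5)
Your proposal has a genuine gap, and it is exactly the step you flag at the end: the relocation schedule in item (i) is not a technical verification left to the reader, it is the entire problem of in-place merging, and your plan as stated cannot satisfy it. The space ``vacated'' by consuming $A_i$ and $B_i$ consists of two disjoint fragments (a piece of $A$'s memory and a piece of $B$'s memory), while the output block $C_i$ must occupy a \emph{contiguous} range of final positions whose length $|A_i|+|B_i|$ exceeds the prefix of $A$'s memory freed so far; writing $C_i$ to its true destination therefore clobbers unconsumed runs $A_{i+1}, A_{i+2}, \dots$, and no median-split grouping fixes this, because the misalignment between freed fragments and output positions is global. The only way out is to park merged blocks in the freed fragments out of order and later permute them into place --- at which point you need an in-place permutation of equal-sized blocks plus a record of where everything went, which is precisely the machinery your proposal lacks and the paper builds. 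The paper never stages output through scratch memory at all: its transformation phase (Algorithm \ref{EndMerge}) first permutes the fixed-size blocks to their ``end-sorted'' positions --- a permutation of equal-sized chunks, hence in-place friendly, driven by encoded target positions and randomized symmetry breaking --- and the key structural lemma is that after this permutation each block can be inverted with \emph{at most one} successor block, identified by an encoded inversion pointer. The sort phase (Algorithms \ref{separatealg} and \ref{seqsortalg}) then eliminates these residual inversions by divide-and-conquer, where each split requires merging only one pair of $O(\log n)$-sized blocks out of place in the stack. Inversion encoding is used only for $O(\log n)$-bit metadata per block (ranks, targets, pointers, flags); restorable buffers, which your proposal leans on, do not appear in the paper's merge algorithm at all.

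A secondary unresolved point: the region you sacrifice to the restorable buffer (size $\Theta(s\log n)$ for $s$ words of scratch) consists of input elements that must themselves end up merged into the output, so after the main pass you are left with a smaller but structurally identical merging problem, and your writeup does not say how this recursion terminates within the work and span budget. Your work bound (``each element is touched a constant number of times'') and your span bound both presuppose the relocation schedule that is missing, so neither can be checked. The honest summary is that your proposal correctly identifies the right primitives (co-ranking, blockwise processing, encoding to simulate scratch) but stops exactly at the point where the paper's two new ideas --- end-sorting blocks by an in-place permutation, and the one-inversion-pointer-per-block invariant that makes the cleanup divide-and-conquer --- are needed.
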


Let $A$ and $B$ be sorted arrays of sizes $n$ and $m$. For simplicity, we assume $n$ and $m$ are each divisible by a parameter $b$; in Section \ref{removing divisibility} we remove this assumption. We view $A$ and $B$ as partitioned into blocks of size $b$. Hence $A$ becomes the sequence $X_A$ (with $n/b$ blocks) and $B$ becomes the sequence $X_B$ (with $m/b$ blocks). Let $N = \frac{n}{b} + \frac{m}{b}$. We define $X$ as the sequence of $N$ blocks given by:
\[X[i] = \begin{cases}
    X_A[i] & \text{ if } 0 \leq i < \frac{n}{b} \\
    X_B[i - \frac{n}{b}] & \text{ otherwise.}
\end{cases}\]

Our merging algorithm has two main phases.
\begin{enumerate}[1.]
    \item \textbf{The transformation phase}. In this phase, we merge in-place the sequences of blocks $X_A$ and $X_B$ by the block endpoints. To facilitate this process, we use the encoding techniques discussed in Section \ref{encodingdisc} to equip every block $X[i]$ with the following data.
    \begin{itemize}
        \renewcommand\labelitemi{--}
        \item The block's rank according to endpoint in the opposite block sequence.
        \item The block's final position after merging by block endpoints.
        \item An inversion pointer identifying the sole successive block (if any) with which it can have an inversion after merging.
    \end{itemize}
    
    \item \textbf{The sort phase}. After the transformation phase, each block can have an inversion with at most one successive block. We perform a divide-and-conquer procedure to eliminate these inversions, ultimately producing a globally sorted array.
\end{enumerate}

Throughout this process, we only use $O(\log n)$ sequential-stack allocated memory and the total work is $O(n)$. In the next few sections, we detail each phase of the algorithm, and in Section \ref{final rem}, we discuss how to handle duplicate elements, reduce the sequential-stack memory to $O(1)$, and remove the dependence on randomness.

\subsection{Transformation Phase}\label{tphasemerge} We describe how to merge $X_A$ and $X_B$ by endpoints, and compute and encode inversion pointers. Let
\[b = 8\log(n+m) + 7.\]
Each block $X[i]$ has size $b$. Within $X[i]$, we encode $4$ $\log (n+m)$-bit numbers and $3$ bits:
% \begin{table}[h]
% \begin{tabular}{ll}
%      $T_b[i]$ &  The end-sorted position of $X[i]$; \\
% $\inv_b[i]$ & The inversion pointer, identifying which block can form an inversion with $X[i]$; \\
% $R_b[i]$ &  The rank of $X[i]$ according to endpoint among blocks of the opposite sequence; \\
% $I_b[i]$ &  The index of $X[i]$; \\
% $C_b[i], D_b[i], E_b[i]$ &  A random bit, a done flag, and a swap flag respectively.
% \end{tabular}
% \end{table}
\[
\begin{array}{ll}
T_b[i] &  \text{The end-sorted position of } X[i]; \\
\inv_b[i] & \text{The pointer identifying which block can form an inversion with } X[i]; \\
R_b[i] &  \text{The rank of } X[i] \text{ by endpoint among blocks in the opposite sequence}; \\
I_b[i] &  \text{The index of } X[i]; \\
C_b[i], D_b[i], E_b[i] &  \text{A random bit, a done flag, and a swap flag respectively}.
\end{array}
\]
Note the last element of each block is not used in the encoding scheme to avoid issues of concurrency (we will be binary searching concurrently over endpoints of blocks). In addition, we split every block $X[i]$ into $X_L[i]$ and $X_R[i]$
where
\[X_L[i] = X[i][0 : s] \text{ and } X_R[i] = X[i][s : b] \text{ for } s = 6\log(n+m) + 2.\]
Here $X_L[i]$ (the `left side') holds most of the encoded data, while $X_R[i]$ (the `right side') contains $T_b[i], C_b[i]$, and $D_b[i]$. This separation is used to prevent conflicts that arise when concurrently reading encoded values during block swaps. We detail this further in Section \ref{endmergesection}.

\subsubsection{Computing ranks, end-sorted positions, and inversion pointers}\label{compRandI} First, we compute the rank according to endpoint of each block $X[i]$ in the opposite sequence via parallel binary searches. Specifically, if $X[i]$ belongs to $X_A$, we binary-search for its endpoint over the endpoints of $X_B$ and if $X[i]$ belongs to $X_B$, we do the reverse. The resulting rank of $X[i]$ (encoded in $R_b[i]$) determines its end-sorted block position
\[T_b[i] = i + R_b[i],\]
which encodes where $X[i]$ appears among all $N$ blocks once merged by endpoints. Next, we compute the inversion pointer $\inv_b[i]$. After merging by endpoints, each block $X[i]$ can potentially contain an inversion with exactly one other block, namely the block in the opposite sequence with the smallest endpoint larger than $X[i]$'s endpoint. After merging, this block has index
\[\inv_b[i] = T_b[R_b[i]].\]
We encode the values $R_b[i]$ and $\inv_b[i]$ into the left side of each block $X[i]$ and $T_b[i]$ into the right side of each block $X[i]$.

\subsubsection{End-merging}\label{endmergesection} We now give an in-place algorithm to merge $X_A$ and $X_B$ by endpoints, with the values $R_b[i],$ $T_b[i],$ and $\inv_b[i]$ already computed. At a high-level, the algorithm employs a standard symmetry breaking technique to insert each block into its end-sorted position.

Each block $X[i]$ has its desired end-sorted position $T_b[i]$, and we will refer to $X[T_b[i]]$ as the target of $X[i]$. To move every block to its final position, we iterate rounds wherein each block that is not in its desired end-sorted position flips a fair coin. Ideally, we have each block $X[i]$ swap with block $X[T_b[i]]$ if $X[i]$ flips heads ($C_b[i] = 1$) and $X[T_b[i]]$ flips tails ($C_b[T_b[i]] = 0$). Then in each round, a fraction of the remaining blocks are expected to perform swaps so that after $O(\log (n+m))$ rounds \whp, all blocks will be in their correct end-sorted position. However, a subtle issue arises due to the nature of inversion encoding; when a block $X[i]$ reads its encoded values to determine whether to swap with its target, it can happen that another block $X[j]$ whose target is $X[i]$ initiates a swap with $X[i]$ concurrently, thereby changing the encoded values read from $X[i]$. To ensure blocks are processed consistently, we will swap blocks in two stages, where the first stage will read encoded data from right sides and swap left sides, and the second stage will do the reverse.

We now describe our EndMerge algorithm in detail. The pseudocode appears in Algorithms \ref{EndMerge} and \ref{done-alg}. To start, in parallel, we initialize each block $X[i]$'s swap flag ($E_b[i]$) to false (line 4). Additionally, if $X[i]$ is already in its end-sorted position, we set the done flag ($D_b[i]$) to true (line 5). Otherwise, we initialize the done and swap flags to false (line 7). This concludes the initialization phase of the algorithm. Going forward, we say that a block is \textit{active} in an iteration if it is not in its end-sorted position. We now iterate rounds consisting of three main steps:

\begin{enumerate}[1.]
    \item \textbf{Coin flipping: } In parallel for each block $X[i]$, if $X[i]$ is active, we flip a random coin and encode the result in $C_b[i]$ (lines 10-11).
    \item \textbf{Left swaps: } At this point, each active block $X[i]$ has its end-sorted position, its done flag, and its coin flip encoded in its right side $X_R[i]$. Reading from $X_R[i]$ in parallel, we check if $X[i]$ is done, has flipped heads, and its target has flipped tails (line 13). If these conditions are met, $X[i]$ sets its done flag to true, its swap flag to true, and writes its own index $i$ into $I_b[i]$ (line 14). Finally, we swap the left sides $X_L[i]$ and $X_L[T_b[i]]$ (line 15). We note that since only the left side of active blocks are swapped in this step, it is safe to read the values $T_b[i], C_b[i],$ and $D_b[i]$ encoded in the right side of each block.
    \item \textbf{Right swaps: } We complete the block swaps initiated in the previous step. Reading from $X_L[i]$ in parallel, we check if the swap flag is true (line 17). If the swap flag is true, this means the left side $X_L[i]$ was part of a swap in the previous step. The index from which that swap was made is accessible by $I_b[i]$. We thus complete the block swap by swapping $X_R[I_b[i]]$ with $X_R[i]$ (line 20). Before doing this swap, we also check if $X[i]$ and $X[I_b[i]]$ form a two-cycle in terms of their end-sorted positions. In this case, we set the done flag of $X[I_b[i]]$ to true as well (line 19).
\end{enumerate}

To determine when the iteration should stop we use a concurrent bit in shared memory. This bit represents a flag and is initialized to true. Then at the beginning of each round, in parallel for each block $X[i]$, we check if the block is not in its desired end-sorted position (lines 5-6 of the Done function), and if so set the flag to false. The flag bit can be reused in each round and thus only contributes one bit of shared memory.

\begin{algorithm2e}[ht] \caption{EndMerge$(X)$} \label{EndMerge}
\DontPrintSemicolon
\SetKwFor{ParForEach}{parallel foreach}{do}{endfch}
\textbf{Input:} Sequence of $N$ blocks $X$ of size $b$ \Comment{each block in $X$ encodes the data mentioned above} \\
\textbf{Output:} $X$ sorted by endpoints \\
\ParForEach{block $X[i]$}{
    $E_b[i] \gets 0$ \\
    \lIf{$T_b[i] = i$}{$D_b[i] \gets 1$} 
    \Else{
        $D_b[i] \gets 0$
    }
}
\While{not Done$(X)$} {
    \ParForEach{block $X[i]$}{
        \If(\Comment{block $X[i]$ is active}){$D_b[i] = 0$}{
            $C_b[i] \overset{R}{\gets} \{0,1\}$
        }
    }
    \ParForEach{block $X[i]$}{
        \If{$D_b[i] = 0$ and $C_b[i] = 1$ and $C_b[T_b[i]] = 0$} {
            $D_b[i] \gets 1, \ E_b[i] \gets 1, \ I_b[i] \gets i$ \\
            swap$(X_L[i], X_L[T_b[i]])$
        }
    }
    
    \ParForEach{block $X[i]$}{
        \If{$E_b[i] = 1$} {
            $D_b[i] \gets 1, \ E_b[i] \gets 0$ \\
            \lIf{$T_b[i] = I_b[i]$}{$D_b[I_b[i]] \gets 1$}
            swap$(X_R[I_b[i]],X_R[i])$
        }
    }
}
\end{algorithm2e}

\begin{algorithm2e}[ht] \caption{Done}\label{done-alg}
\DontPrintSemicolon
\SetKwFor{ParForEach}{parallel foreach}{do}{endfch}
\textbf{Input:} Sequence of blocks $X$ \Comment{each block in $X$ encodes the data mentioned above} \\
\textbf{Output:} true if $X$ is end-sorted, false otherwise \\
flag $\gets$ 1 \Comment{stored in the heap} \\
\ParForEach{block $X[i]$}{
    \lIf{$D_b[i] = 0$}{flag $\gets$ 0}
}
\textbf{return} flag
\end{algorithm2e}

We now analyze the complexity of the transformation phase. Computing ranks, end-sorted positions, and inversion pointers requires $O(n)$ work and $O(\log n)$ span. We now show that the work and span of EndMerge are $O(n)$ and $O(\log^2 n)$ \whp. To start, the initialization phase of the algorithm consists of a single parallel for loop over each block. Here, the dominating cost is the reading and writing of $T_b[i]$ for each $X[i]$. Since both of these values are encoded using $O(\log n)$ pairs, the ReadBlock and WriteBlock functions use $O(\log n)$ work and span. Thus the total work for this phase is $O(N \log n) = O(n)$ and the span is $O(\log N \log n) = O(\log^2 n)$.

Next, we analyze the while loop in EndMerge. Let $R$ be the number of rounds for which the while-loop runs. Then $R$ is simply the longest number of iterations a block remains active. We claim that $R = O(\log n)$ \whp. In each iteration, an active block has at least a 1/4 chance of being placed in its desired end-sorted position. Thus the number of iterations that block $X[i]$ remains active is stochastically dominated by a geometric random variable $Y_i$ with probability 1/4. This means that $R$ is stochastically dominated by $\max_i Y_i$, and a basic union bound establishes that $R \leq 2\log n$ \whp.

The span contributed by the while loop is easily seen to be $O(R \log n) = O(\log^2 n)$, as the four parallel loops contribute $O(\log n)$ span per round. The work is more complicated. First, note that each block trivially contributes $O(1)$ work per round, whereas active blocks potentially contribute $O(\log n)$ work in reading $C_b[T_b[i]]$. Thus the work contributed by inactive blocks is $O(R N) = O(\log n N) = O(n)$. We are reduced to considering the work done by active blocks. 

% We appeal to the following lemma, the full proof of which is given in the appendix.
\begin{lem}
    The work $W$ contributed by active blocks is $O(n)$ \whp.
\end{lem}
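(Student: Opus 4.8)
The plan is to reduce the stated work bound to a bound on the total number of active-block-rounds, and then to show that the number of active blocks decays geometrically across rounds with high probability. Write $n_t$ for the number of active blocks at the start of round $t$, and let $Y_i$ be the number of rounds during which block $X[i]$ is active. The work an active block performs in a round is $O(\log n)$: it is dominated by reading $C_b[T_b[i]]$, which is encoded over $O(\log n)$ pairs (the $O(\log n)$-size left/right swaps a block performs happen only in the single round it is placed, and there are exactly $N$ placements, so that cost is subsumed). Hence
\[ W = O(\log n)\sum_i Y_i = O(\log n)\sum_{t\ge 1} n_t, \]
using $\sum_i Y_i = \sum_t n_t$. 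Since $b=\Theta(\log(n+m))$ and $n\ge m$, we have $N\log n = \Theta(n+m)=\Theta(n)$, so it suffices to prove $\sum_t n_t = O(N)$ \whp; this gives $W = O(N\log n)=O(n)$.

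Next I would establish that $n_t$ shrinks by a constant factor per round while it is not too small. By the property already recorded — each active block is placed in a round with probability at least $1/4$ using that round's fresh, independent coins — the number $P_t$ of blocks placed in round $t$ satisfies $\E[P_t \mid \mathcal{F}_t] \ge n_t/4$. The crucial point is concentration of $P_t$. The placement indicators are not independent, since the coin $C_b[j]$ is read both by block $j$ and by the unique block targeting position $j$ (as $T_b$ is a permutation). However, viewing $P_t$ as a function of this round's coins, flipping a single coin changes at most two indicators and hence changes $P_t$ by at most $2$. By the bounded-differences (McDiarmid) inequality, $P_t \ge n_t/8$ with probability $1 - e^{-\Omega(n_t)}$; in particular, whenever $n_t \ge c\log n$ this holds \whp, so $n_{t+1} = n_t - P_t \le (7/8)\,n_t$.

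Finally I would combine the two regimes. Union bounding the per-round concentration over the $R=O(\log n)$ rounds (recall $R \le 2\log n$ \whp), every round with $n_t \ge c\log n$ satisfies $n_{t+1}\le (7/8)\,n_t$, so the contribution of such rounds telescopes to $O(n_1)=O(N)$. Once $n_t$ falls below $c\log n$, at most $R=O(\log n)$ further rounds remain, each contributing fewer than $c\log n$ active blocks, for a total of $O(\log^2 n)=o(N)$. Summing, $\sum_t n_t = O(N)$ \whp, and therefore $W=O(n)$ \whp.

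I expect the concentration step to be the main obstacle: because the placement indicators within a round share coins, one cannot apply a Chernoff bound to independent variables directly, so controlling the per-round decrease of $n_t$ is the technical heart of the argument. Exploiting the bounded-difference structure — each coin influences at most two placement tests, since $T_b$ is a bijection — resolves it cleanly; an alternative would be to $3$-color the degree-$\le 2$ dependency graph induced by the cycle structure of $T_b$ and apply a Chernoff bound within each independent color class.
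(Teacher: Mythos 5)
Your proof is correct, and it follows the same overall skeleton as the paper's: show that the number of active blocks decays by a constant factor per round \whp{} while it exceeds a threshold, sum the resulting geometric series, handle the low-count tail rounds crudely, and union bound over the $O(\log n)$ rounds. The genuine difference is the concentration step, which is indeed the heart of the matter. The paper handles the dependence among placement indicators by observing that $Z_j$ and $Z_k$ are independent whenever $\{j,T_b[j]\}$ and $\{k,T_b[k]\}$ are disjoint, so the dependency graph has maximum degree $2$ and contains an independent set $L'$ of size at least $m/3$; it then applies the ordinary Chernoff bound to $Z'=\sum_{j\in L'}Z_j$ and uses $Z \geq Z'$. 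Your primary route instead applies McDiarmid's bounded-differences inequality to the full placement count, using exactly the same structural fact (each coin influences at most two indicators, since $T_b$ is injective on active positions); your closing remark about $3$-coloring the degree-$\leq 2$ dependency graph is essentially the paper's argument. Two further, minor differences: you cut off at threshold $c\log n$ rather than the paper's $\sqrt n$ (both make the tail contribution $o(N)$, yours $O(\log^2 n)$ versus the paper's $O(\sqrt n \log n)$), and your work accounting is actually tighter than the paper's write-up --- you explicitly reduce to $\sum_t n_t = O(N)$ and multiply by the $O(\log n)$ per-active-block-per-round cost, whereas the paper bounds $\sum_i G_i = O(n)$ and concludes directly; the paper's argument is salvaged by noting $G_1 \leq N$ rather than $n$, which your version makes explicit. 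What each approach buys: the independent-subset/Chernoff argument is more elementary (no need for martingale-type inequalities), while McDiarmid absorbs the dependency structure wholesale without exhibiting an explicit independent set.
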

\begin{proof}
Let $G_i = m$ be the number of active blocks at the start of iteration $i$, with active block index set $L$. Moreover, assume that $m \geq \sqrt{n}$. Let $Z_{j}$ be the indicator for $X[j]$ becoming inactive at stage $i+1$. Let $Z = \sum_{j \in L} Z_{j}$. Then $G_{i+1} = G_i - Z$. There is a subset $L'$ of $L$ of size $m/3$, such that $Z_j$ and $Z_k$ are independent for $j,k \in L'$. This is because $Z_j$ and $Z_k$ are independent if and only if $\{j, T_b[j]\}$ and $\{k, T_b[k]\}$ are disjoint, so we can exhaustively choose enough indices from $L$ to construct a large mutually independent set $L'$. Say their indicator sum is $Z'$. The Chernoff bound for Bernoulli variables says that
    \[\Pr[Z' \leq \mu/2] < \exp(-\mu/8), \ \mu = \E[Z'].\]
    Note that $\mu \geq m/12$ since $\Pr[B_j = 1] \geq 1/4$, for any $j$. Thus 
    \[
       \Pr[Z' \leq m/24] < \Pr[Z' \leq \mu/2] < \exp(-\mu/8) < \exp(-m/192) < \exp(-\sqrt{n}/192) < 1/n^2.
    \]
    Note that $23 m/24 \leq G_{i+1}$ implies that $23m/24 \leq m-Z$ which further means $Z' \leq Z \leq m/24$. Thus $\Pr[23 m/24 \leq G_{i+1}] \leq Pr[Z' \leq m/24] < 1/n^2$. We then conclude that $\Pr[23 G_i / 24 \leq G_{i+1}] < 1/n^2$, assuming $G_i \geq \sqrt{n}$.

    Let $I = \{i : G_i \geq \sqrt{n}\}$ and $J = \{j : \sqrt{n} > G_j \geq 1\}$. Note $|I| + |J| = R \geq \log n$ \whp, so assume it is the case. Then, in particular $|I| < \log n$. Hence, by a union bound, $\Pr[\exists i \in I ( 23 G_i / 24 \leq G_{i+1}) ] \leq \log n/n^2 < 1/n$. Thus assume $23 G_i / 24 \leq G_{i+1}$ for all $i \in I$. Then $\sum_{i \in I} G_i \leq n \sum_i (23/24)^i = O(n)$. Moreover, $\sum_{i \in J} G_i \leq |J|\sqrt{n} \leq \sqrt{n}\log n < n$. Hence $W = O(n)$ \whp.
\end{proof}

Finally, we note that each part of the transformation phase is easily seen to use $O(1)$ sequential stack space. Combining everything, we conclude the following lemma.
% \ref{prooflemtphasemerge} of the Appendix.

\begin{lem}\label{merge-transformation-phase}
    The transformation phase takes $O(n)$ work and $O(\log^2 n)$ span whp and uses only $O(\log n)$ sequential stack-allocated memory.
\end{lem}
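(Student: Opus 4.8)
The plan is to assemble the lemma from the per-component bounds already derived in the preceding discussion, so the proof is largely a matter of careful composition. I would decompose the transformation phase into its three constituents: (i) computing the ranks $R_b[i]$, end-sorted positions $T_b[i]$, and inversion pointers $\inv_b[i]$ by parallel binary searches; (ii) the initialization loop of EndMerge (lines 3--8), which sets each block's done and swap flags; and (iii) the main while loop of EndMerge. I would bound the work, span, and sequential stack space of each piece separately, then combine by summation while taking a union bound over the high-probability events.

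For work and span I would collect the bounds established above. Part (i) costs $O(n)$ work and $O(\log n)$ span, since each of the $N = O(n/\log n)$ blocks binary-searches over the endpoints of the opposite sequence and writes $O(\log n)$-bit encoded values. Part (ii) is a single parallel loop whose dominating cost is reading and writing $T_b[i]$ via ReadBlock and WriteBlock, giving $O(N\log n) = O(n)$ work and $O(\log N \log n) = O(\log^2 n)$ span. For part (iii), the while loop runs $R = O(\log n)$ rounds \whp; each round's four parallel loops contribute $O(\log n)$ span, so the total span is $O(R\log n) = O(\log^2 n)$, while the work splits into the $O(RN) = O(n)$ contributed by inactive blocks and the $O(n)$ \whp contributed by active blocks from the preceding lemma. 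Summing the three parts gives $O(n)$ work and $O(\log^2 n)$ span \whp.

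It remains to bound the sequential stack space and to justify the probabilistic composition, which is the one place I would take care. For the stack, I would note that ReadBlock and WriteBlock each use $O(1)$ stack (as observed earlier) and that a binary search can be run iteratively in $O(1)$ stack; the only growth comes from realizing each parallel-for as a balanced divide-and-conquer, which when scheduled sequentially has recursion depth $O(\log N) = O(\log n)$ with $O(1)$ live words per level. Since the while loop is sequential across rounds and these recursions do not nest between iterations, the whole phase uses $O(\log n)$ sequential stack-allocated memory. Finally, because both $R = O(\log n)$ and the active-block work bound hold only \whp, I would union-bound over these two failure events (each of probability at most $n^{-c}$) to conclude that the work and span bounds hold simultaneously \whp. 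This union bound is the main, though routine, obstacle; everything else is direct accounting.
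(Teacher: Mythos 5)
Your proposal is correct and follows essentially the same route as the paper: the same three-way decomposition (binary-search/encoding step, EndMerge initialization, EndMerge while loop), the same per-part work and span accounting, and the same reliance on the $R = O(\log n)$ \whp{} round bound and the active-block work lemma. Your explicit union bound over the two \whp{} events and your $O(\log n)$ stack accounting via the sequentialized fork depth of the parallel loops are slightly more careful than the paper's treatment (which simply asserts each part uses constant stack space and combines the bounds), but they do not constitute a different argument.
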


\subsection{Sort Phase}

Before describing the sort phase, we make some definitions and an observation. Let $S$ be a block sequence. We say that $S[i] \leq S[j]$ if there is no element of $S[i]$ inverted with an element of $S[j]$. Call a block $S[i]$ \emph{in-order} if $i < j$ implies $S[i] \leq S[j]$, and \emph{almost in-order} if there is at most one $j > i$ such that $S[i] \not\leq S[j]$. Observe that a block sequence being sorted is equivalent to every block being in-order and every block being sorted.

As noted in Section \ref{compRandI}, after end-merging, every block in $X$ is sorted and almost in-order, and moreover, if any block is out of order, then the block that witnesses this is $X[\inv_b[i]]$. We now describe a top-down divide-and-conquer algorithm that sorts $X$ in-place. At the top level we split $X$ in half (by blocks). It follows from previous discussion that, with respect to the splitting point, the only potential inversions exist between the rightmost block on the left half, say $X[m]$, and the block on the right half whose index is given by $j = \inv_b[m]$. Then after merging $X[m]$ and $X[j]$ out-of-place in the stack, the left and right sub-problems are effectively disjoint and can be processed in parallel. We note that after merging $X[m]$ and $X[j]$, the inversion pointers need to be updated. The pointers on the right remain the same, but inversion pointers from the left, which point past the midpoint, should now point to the end of the left slice. Rather than update every inversion pointer, we simply take the inversion pointer of $X[i]$ to be the minimum between the initial inversion pointer $\inv_b[i]$ and the end of the current slice (line 2 in Separate). We give the pseudocode in Algorithms \ref{separatealg} and \ref{seqsortalg}. 

\begin{algorithm2e}[ht]
\SetKwFor{ParFor}{parallel for}{do}{endfch}
\caption{Separate}\label{separatealg}
\textbf{Input:} A block sequence $X$ with $K$ blocks of size $b$, $\inv_b$ is the initial data of the inversion pointers \\

$i \gets K/2-1$ and $j \gets \min(K-1,\inv_b[i])$ \\
$n_i \gets \inv_b[i]$ and $n_j \gets \inv_b[j]$ \Comment{store encoded data in stack}\\
$A \gets X[i]$ and $B \gets X[j]$ \\
reset the encoding in $A$ and $B$ \\
$E \gets $ merge$(A,B)$ \\
$X[i] \gets E[0:b]$ and $X[j] \gets E[b:2b]$ \\
$\inv_b[i] \gets n_i$ and $\inv_b[j] \gets n_j$ \Comment{encode inversions pointers back into $A$ and $B$}
\end{algorithm2e}

\begin{algorithm2e} [ht] 
\DontPrintSemicolon
\SetKwBlock{DoParallel}{do in parallel}{end}
\caption{SeqSort}\label{seqsortalg}
\textbf{Input:} A block sequence $X$ with $K$ blocks of size $b$. \\
\textbf{Output:} $X$ in sorted order \\
\lIf{$K$ is 1}{
    reset the encoding and return
}
Separate($X$) \\
\DoParallel{
    SeqSort($X[0:K/2])$ \\
    SeqSort($X[K/2:K])$
}
\end{algorithm2e}

Now we show that the algorithm is correct. We prove by induction that at any point in the recursion, for the recursion slice $S = X[a:b]$, the following holds: for each $a \leq i \leq b$, if $i < j$ and $X[i] \not\leq X[j]$ then $j = \min(b-1,\inv_b[i])$. We denote this statement by $(*_S)$. Note the base case of this induction is exactly the observation we made prior about the block sequence $X$ after the transformation phase. We now prove the inductive step.

Suppose $(*_S)$ is true for a recursion slice $S = X[a:b]$. Let $c = (a+b)/2 - 1$, and let $w(i) = \min(b-1,\inv_b[i])$. The inductive hypothesis asserts that if $X[i]$ is part of an inversion, the out of order block is given by $X[w(i)]$.

Let $X'[a : b]$ be the array obtained from $X[a : b]$ after calling Separate. Then $X'_L = X'[a:c+1]$ and $X'_R = X'[c+1:b]$ are the two recursion slices at the next depth. Let $w_L(i) = \min(c,\inv_b[i])$ and $w_R(i) = \min(b-1,\inv_b[i])$. We will show $(*_{X'_L})$ and $(*_{X'_R})$. First we establish some useful relations, where we let $B$ denote an arbitrary block of $X$.
\begin{equation}\label{rel1}
    i \neq c,w(c) \implies X'[i] = X[i]
\end{equation}
\begin{equation}\label{rel2}
    i \leq c \implies X'[i] \leq X'[w(c)]
\end{equation}
\begin{equation}\label{rel3}
    X[c] \leq B \implies X'[c] \leq B
\end{equation}
% \begin{equation}\label{rel4}
%     B \leq X[w(c)] \implies B \leq X'[w(c)]
% \end{equation}
\begin{equation}\label{rel5}
    B \leq X[c], X[w(c)] \implies B \leq X'[c]
\end{equation}
\begin{equation}\label{rel7}
    B \geq X[c], X[w(c)] \implies B \geq X'[w(c)]
\end{equation}
All of these relations ultimately reduce to the fact that $X'[c]$ and $X'[w(c)]$ contain the smallest and largest elements from $X[c] \cup X[w(c)]$ respectively. Also note, using (\ref{rel1}) and the inductive hypothesis, that if $i \neq c,w(c)$ and $j \neq w(i), w(c), c$ then
\begin{equation}\label{easy-case}
    X'[i] = X[i] \leq X[j] = X'[j].
\end{equation}

We first show $(*_{X'_L})$. Fix $a \leq i \leq c$. We want to show that if $j > i$ and $j \neq w_L(i)$ then $X'[i] \leq X'[j]$. By (\ref{easy-case}), for $i < c$, it always suffices to consider only $j \in \{w(i),w(c),c\}$, so we treat this case first.
This further splits into three cases.
\begin{description}
   \item[Case 1. $w(i) < c \colon $]
   In this case $w_L(i) = w(i)$. As $j \neq w_L(i)$, we have $j = w(c)$ or $c$. Notice that $w(i) < c$ implies that $X[i] \leq X[c],X[w(c)]$, so $X'[i] = X[i] \leq X'[c] \leq X'[w(c)]$ by (\ref{rel5}), (\ref{rel1}), (\ref{rel2}), and we are done.
   
   \item[Case 2. $w(i) = c \colon $]
   Here $w_L(i) = c$. We thus have $j = w(c)$, so $X'[i] \leq X'[w(c)]$ by (\ref{rel2}).

   \item[Case 3. $w(i) > c \colon $]
   Here $w_L(i) = c$. This means that $i < c < w(i)$, which means that the block potentially out of order with $X[c]$ is $X[w(i)]$, so $w(i) = w(c)$ by the inductive hypothesis.
   % This implies that $X[i] \leq X[c]$, and thus that $X[c] \not\leq X[w(i)]$ so $w(i) = w(c)$.
   As $j \neq w_L(i)$, we have $j = w(c) = w(i)$. But then we are done by (\ref{rel2}).
\end{description}

To finish the proof of $(*_{X'_L})$, we now only need to consider $i = c$. In this case $w_L(c) = c$. For any $j > c$ and $j \neq w(c)$, we have $X[c] \leq X[j] = X'[j]$, so (\ref{rel3}) gives $X'[c] \leq X'[j]$. If $j = w(c)$, then $X'[c] \leq X'[w(c)]$ by (\ref{rel2}). Hence we conclude $(*_{X'_L})$.

Now we show $(*_{X'_R})$. First note that $w_R(i) = w(i)$. Let $c+1 \leq i \leq b$. We want to show that if $j > i$ and $j \neq w(i)$, then $X'[i] \leq X'[j]$. We split into two cases.
\begin{description}
   \item[Case 1. $i \neq w(c)\colon $]
   By (\ref{easy-case}), it suffices to consider $j =w(c)$, with $w(i) \neq w(c)$. 
   % By (\ref{rel4}), it suffices to consider $j=w(i),w(c)$. Assuming $j \neq w_R(i) = w(i)$, we only have to consider $j = w(c)$ in the case where $w(i) \neq w(c)$ and $i < w(c)$.
   Since $w(i) \neq w(c)$, then $X'[i] = X[i] \leq X[w(c)]$ so $X'[i] \leq X'[w(c)]$ by (\ref{rel3}). 
   
   \item[Case 2. $i = w(c)\colon $]
   Let $j > w(c)$ with $j \neq w(w(c))$. Then $X[c],X[w(c)] \leq X[j]$ and thus $X'[w(c)] \leq X'[j]$ by (\ref{rel7}), which is what we desire.
\end{description}
We conclude $(*_{X'_R})$.

This induction implies that SeqSort is correct, because at the base case of the recursion (when the recursion slices are size 1), the inductive hypothesis simply asserts that the block is in-order. Moreover, the blocks start sorted (ignoring the encoding which is reverted at the end of the algorithm), and this is preserved by the algorithm, so at the base case all the blocks are sorted as well. Together with the in-order property, this implies that the final array is sorted.

We analyze the complexity of SeqSort. The work follows the recurrence 
\[W(N) = 2W(N/2) + O(\log n),\]
so $W(N)= O(N b)$.
The span satisfies
\[D(N) = D(N/2) + O(\log n)\]
assuming the merge in the Separate subroutine is done sequentially in the stack, so $D(N) = O(\log N \log n)$. Finally, the only non-trivial auxiliary memory required during the Sort phase is the $O(\log n)$ sequential stack-allocated memory used to merge blocks $A$ and $B$ out of place in the Separate subroutine (lines 4-5). We conclude the following lemma.

\begin{lem}\label{merge-sort-phase}
    The sort phase takes $O(n)$ work and $O(\log^2 n)$ span and uses only $O(\log n)$ sequential stack-allocated memory.
\end{lem}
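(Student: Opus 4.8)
The plan is to verify the three complexity claims separately — work, span, and sequential stack memory — by analyzing the SeqSort recursion tree, since correctness has already been established by the $(*_S)$ induction. The engine of all three bounds is the observation that a single call to Separate, excluding its recursive descendants, costs only $O(\log n)$ in each resource. It reads and writes a constant number of inversion pointers via ReadBlock/WriteBlock (each $O(\log n)$ work and at most $O(\log n)$ span), copies the two blocks $A$ and $B$ of size $b = \Theta(\log n)$ into the stack, resets their encodings, merges them sequentially, writes the result back, and re-encodes the two pointers. Since $b = O(\log n)$, every one of these operations is $O(\log n)$ work and $O(\log n)$ span, and uses $O(b) = O(\log n)$ stack.

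For the work, I would set up the recurrence $W(N) = 2W(N/2) + O(\log n)$ with base case $W(1) = O(\log n)$ (the encoding reset performed at a leaf). Summing over the recursion tree, the $N$ leaves contribute $O(N\log n)$ and the internal-node Separate calls contribute a geometric sum also bounded by $O(N\log n)$. Recalling $N = (n+m)/b$ with $b = \Theta(\log n)$, this gives $W(N) = O(N\log n) = O(Nb) = O(n+m) = O(n)$, using $n \ge m$. For the span, I would note that the two recursive calls are launched in parallel while Separate, including its merge, is performed sequentially in the stack; this yields $D(N) = D(N/2) + O(\log n)$, and with recursion depth $\log N = O(\log n)$ it solves to $D(N) = O(\log N \cdot \log n) = O(\log^2 n)$.

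The step I expect to need the most care is the sequential stack bound, because a naive account could mistakenly multiply the $O(\log n)$ recursion depth by the $O(b) = O(\log n)$ workspace of Separate and conclude $O(\log^2 n)$. The key point I would stress is that Separate's $O(b)$ workspace (the copies $A$, $B$, and the merge output $E$) is allocated and freed within a single call, before the two recursive SeqSort calls are made. Hence in a sequential schedule only one Separate frame is ever live at a time, contributing $O(\log n)$, on top of the chain of SeqSort frames, each of which stores only $O(1)$ words (block pointers and indices). Since the recursion depth is $O(\log N) = O(\log n)$, these frames contribute only $O(\log n)$, so the total sequential stack usage is $O(\log n)$, as claimed. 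Combining the three bounds gives the lemma.
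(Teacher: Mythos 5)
Your proposal is correct and follows essentially the same route as the paper: the same work recurrence $W(N) = 2W(N/2) + O(\log n)$ solving to $O(Nb) = O(n)$, the same span recurrence $D(N) = D(N/2) + O(\log n)$ solving to $O(\log N \log n) = O(\log^2 n)$, and the observation that the only non-trivial stack usage is Separate's out-of-place merge of two $O(\log n)$-size blocks. Your extra care on the stack bound --- noting that Separate's workspace is freed before the recursive calls, so only one such frame is ever live on top of a chain of $O(1)$-word SeqSort frames --- is a valid and slightly more explicit justification of what the paper asserts in one line.
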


\subsection{Removing the Divisibility Assumption} \label{removing divisibility}
We now remove the assumption of the size of the input arrays $A$ and $B$ being divisible by $b$. For convenience, we will treat $A$ and $B$ as being stored consecutively in memory, written $AB$, though this does not functionally change anything. We first write $AB$ as $A_1A_2B_1B_2$ where $A_1$ is the largest prefix of $A$ divisible by $b$ and similarly for $B_1$. We rotate the array so that it is of the form $A_1B_1A_2B_2$. Since $|A_1|$ and $|B_1|$ are divisible by $b$, we may merge $A_1B_1$ using the parallel in-place algorithm to get a sorted array $S$. We then merge $A_2B_2$ sequentially using any quadratic in-place sequential sorting algorithm (e.g. Bubble Sort) to get $T$, which has size $O(\log n)$. Our array has now become become $ST$, which we rotate to be $TS$. 

Next, we split $S$ into two halves, so that our array is of the form $T S_1 S_2$. We first restrict our attention to merging $TS_1$. Notice that $|S_2| \geq |T| \log |S|$ since $|T|$ is $O(\log n) = O(\log|S|)$, so we may encode $|T|$ values between $0$ and $|S_1|-1$ in $S_2$. In parallel, for each element of $T$, we binary search for its position in $S_1$ and use inversion encoding to encode these pointers in $S_2$. Write $T$ as $t_1t_2\dots t_l$ and $S_1$ as $\bar s_1 \bar s_2 \dots \bar s_{l+1}$ where $\bar s_i$ is the range of elements of $S_1$ between $t_{i-1}$ and $t_{i}$. These ranges are easily determined by reading the encoded pointers from $S_2$. Now, rotate $\bar s_1$ to be in front of $T$, so we have $\bar s_1 t_1 \dots t_l \bar s_2 \dots \bar s_{l+1}$. Then rotate $\bar s_2$ to get $\bar s_1 t_1 \bar s_2 t_2 \dots$. We iterate this procedure sequentially until we obtain the array $M = \bar s_1 t_1 \dots \bar s_i t_i \dots \bar s_l t_l \bar s_{l+1}$. Then $M$ is the array obtained by merging $TS_1$. We reset the encoding on $S_2$ so that it is once again sorted. We are thus reduced to merging $MS_2$.

Observe that the only elements of $M$ that can be inverted with elements of $S_2$ are elements from $T$ which are larger than every element in $S_1$. As there can be at most $|T|$ of these, the prefix of $M$ consisting of the first $|S_1|$ elements, call it $L$, is in order with respect to the remaining elements of $M$ and $S_2$. If we write $M$ as $LR$, the entire array is $LRS_2$, it thus suffices to merge $RS_2$. We may now perform the same process as before to merge $R$ and $S_2$, but instead use inversion encoding on $L$ to encode the pointers. The final result is the merged array of $S$ and $T$.

We analyze the complexity of the first step, as the second step is essentially identical. The rotation of $\bar s_i$ into place rotates $|T|+|\bar s_i|$ elements and performs two ReadBlock operations. Hence the total work is $\sum_{i = 1}^{|T|} |T|\log n + |\bar s_i| = |T|^2 \log n + |S_1| = O(n)$. The span is $O(\log^2 n)$ since the binary searches and encoding take $O(\log n)$ span, each rotation is $O(\log n)$ span, the read operation is $O(\log \log n)$ span, and there are $\log n$ rotation steps. There is an additional cost to merge $A_2B_2$, but since $|A_2|,|B_2| = O(\log n)$, it is negligible.

Putting the results of this section together with Lemma \ref{merge-transformation-phase} and Lemma \ref{merge-sort-phase}, we conclude Theorem \ref{ipmerge}.

\subsection{Final Remarks}\label{final rem}

We have produced a work-efficient near-optimal span strong PIP algorithm for merging two sorted arrays. Here, we make a couple final remarks.

\begin{enumerate}[(i)]
    \item The only meaningful stack-allocated space our algorithm uses is in the merging of two blocks of size $O(\log n)$, during the Separate procedure. We can instead use a more complicated sequential in-place merge algorithm \cite{inplacemerge} to eliminate this usage. This modification results in a Strong PIP merge algorithm using constant sequential stack space.

    \item Ultimately, the algorithm we give relies on some randomness. This only happens in the transformation phase, where we make use of a randomized symmetry breaking strategy that simplifies the situation at the cost of losing deterministic guarantees on the runtime. We can instead compute a 2-ruling set using the algorithm given in \cite{Vishkin2008ThinkingIP}, which allows us to do symmetry breaking in a deterministic fashion. This does not affect the complexity of the algorithm.

    % \item 
    % Finally, with a slight modification, we can handle arrays with duplicate elements. The modification is to use a different encoding scheme that, like the one give in section 3 of \cite{kuszmaul2020inplaceparallelpartitionalgorithmsusing}, works even when duplicate elements are present. Other than this change, the algorithm is identical. Note that this encoding scheme assumes the ability to temporarily overwrite elements of the input.
\end{enumerate}
\section{Random Permutation}\label{randperm}

\subsection{Introduction}
\subsubsection{Main result.} In this section, we consider the random permutation (or random shuffling) problem: given an array $A$, randomly permute the elements of $A$. This problem has been studied extensively in the sequential setting, the non-in-place parallel setting \cite{highly-parallel}, and the in-place parallel setting \cite{2021pip, penschuck2023engineeringsharedmemoryparallelshuffling}. Our contribution is a substantial improvement of the main result of \cite{penschuck2023engineeringsharedmemoryparallelshuffling}, as well as the results in \cite{2021pip}, under the assumption that the elements of $A$ are taken from a totally ordered universe and that we have the ability to modify the underlying bit representation of these elements. We show the following theorem.

\begin{thm}\label{rand-perm-main-thm}
    Uniformly randomly shuffling an array of size $n$ can be done with $O(n)$ expected work, $O(\poly n)$ span \whp, and using $O(1)$ sequential stack-allocated memory.
\end{thm}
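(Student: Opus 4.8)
The plan is to take a work-optimal parallel random shuffle and realize it in-place using the adjustable buffers of Definition~\ref{simulated-buffer-defn}, relying on compare-and-swap for the concurrent steps (consistent with the footnote identifying this as the only algorithm here that needs atomics). For the underlying shuffle I would use a recursive random-scatter (``dart-throwing'') scheme: each active element independently draws a target $t \overset{R}{\gets} \{0,\dots,m-1\}$ into $m = \Theta(n)$ slots, every element that is the unique claimant of its slot is placed, and the colliding elements are recursed upon. This is exactly bucket-sorting $n$ i.i.d.\ uniform keys and refining ties, so the resulting order of the elements is a uniform random permutation; since a constant fraction of the active elements are placed each round, it runs in $O(n)$ expected work with $O(\poly n)$ span \whp\ (the number of rounds being $O(\log n)$ \whp). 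I would first isolate this correctness-and-work statement as a lemma that is independent of the in-place machinery.

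To make the scheme in-place I would install an adjustable buffer $B(s,w,l)$ (Definitions~\ref{buffer-defn} and~\ref{simulated-buffer-defn}) over a constant fraction of the input with word size $w = \Theta(\log n)$, so that the stolen bits of its $s = \Theta(n/\log n)$ auxiliary elements furnish $\Theta(n)$ bits of genuine scratch with $O(1)$-cost reads and writes, while the simulated-auxiliary operations keep the actual element values available to be read and moved. The only persistent scratch the scheme needs is $O(1)$ bits per slot (occupancy and ``placed'' flags) plus a handful of $O(\log n)$-bit counters, which fits the $\Theta(n)$-bit budget; crucially, the random targets are transient, generated, used for a single atomic claim, and then discarded, so they never need to be stored in the array. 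The physical relocation of a claimed element is carried out by swaps, or in bulk by the in-place partition primitive of \cite{kuszmaul2020inplaceparallelpartitionalgorithmsusing}, and concurrent claims on the same slot are arbitrated by compare-and-swap. Once the recursion bottoms out I would call the buffer's restoration operation (Section~\ref{restorable-buffers}) to reinstate the displaced high-order bits, leaving the input array holding exactly the permutation produced.

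The main obstacle is reconciling the concurrent element movement with the two restrictions baked into Definition~\ref{simulated-buffer-defn}: a simulated-auxiliary operation may never run concurrently with a write to an encoding element, and the two members of an encoding pair may never be written concurrently. I expect to resolve this exactly as in EndMerge (Algorithm~\ref{EndMerge}), by splitting each round into ordered phases---a claim/detect phase that touches only scratch bits, a movement phase that relocates placed elements, and a repair phase that re-sorts any encoding pair whose bit a write disturbed---so that reads of element values and writes of encoding bits are never simultaneously live. The accompanying analytic burden is twofold. First, I must show the in-place execution realizes exactly the distribution of the abstract scatter scheme, so that uniformity survives the buffer, the restoration, and the atomic arbitration. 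Second, I must keep the total at $O(n)$ expected work even though every encoded read or write costs $O(\log n)$; this forces the $\log n$-granular accesses to be confined to $O(1)$ per element per round, with all remaining per-element bookkeeping done in the $O(1)$-cost stolen bits. Finally, the $O(1)$ sequential-stack bound requires organizing the recursion so that its sequential schedule has bounded depth, as in the merging algorithm; relative to merging, it is precisely the interaction between randomness, atomics, and the buffer's restoration that I anticipate will demand the most care.
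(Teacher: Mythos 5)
There is a genuine gap, and it sits exactly where you deferred the ``analytic burden.'' Your plan is to prove that the in-place execution \emph{realizes exactly} the distribution of the abstract scatter scheme. That statement is false for any algorithm built on the adjustable buffers of Definition~\ref{simulated-buffer-defn}: by the buffer's own contract, every write into the encoding region must be followed by a repair swap so that the pair continues to encode the current scratch bit. Consequently, when the algorithm terminates, the relative order inside each encoding pair is dictated by whatever the scratch bits happen to be --- not by the abstract scheme --- so the output is the abstract permutation composed with forced transpositions on the encoding pairs, which is \emph{not} uniform and not distributionally equal to the abstract output. The paper's resolution is not to prove exactness but to give up on it: after restoring the buffer, it uniformly randomly transposes every encoding pair, and then proves that this final randomization repairs the damage. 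That proof is the mathematical core of the result --- the notion of an \emph{encoder}, the uniform encoder $E$, and the exposure-argument lemmas (Lemmas~\ref{casework-rand-perm}--\ref{commuter} and Corollary~\ref{swap-corollary}) showing that encoder perturbations can be absorbed into $E$, that $E$ commutes past the remaining swaps, and that $U_k E \sim U_k$. Nothing in your proposal supplies this or a substitute for it; the phase discipline you borrow from EndMerge resolves the \emph{concurrency} constraints of Definition~\ref{simulated-buffer-defn}, but it cannot make the repair swaps distributionally invisible.

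A second, structural gap: you install a single adjustable buffer over a constant fraction of the input, but the elements underneath the buffer must themselves end up uniformly placed, and all traffic through the auxiliary part costs $O(\log n)$ per access. The paper needs a two-stage design precisely for this reason: stage 1 shuffles everything outside an $O(n/\log n)$-size restorable buffer at no simulation cost, and only then does stage 2 place an adjustable buffer inside the already-shuffled prefix and process the remaining $O(n/\log n)$ elements, so that the $O(\log n)$-factor overhead of simulated operations multiplies only an $O(n/\log n)$-size workload. In your one-shot scheme you must instead bound how many times each auxiliary position is touched over all rounds of the scatter recursion, and you must also solve the in-place routing problem you wave at with ``swaps, or in bulk by the partition primitive'': a placed element's claimed slot is occupied by another element with its own pending claim, so realizing the abstract placement by physical swaps requires either a reservation protocol with a proof that the realized composition of swaps has the intended distribution (this is what the paper inherits from parallel Knuth shuffle) or a separate cycle-routing argument. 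Your dart-throwing variant is a legitimately different abstract shuffle, but as an in-place algorithm it needs both of these arguments, and neither is sketched.
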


\subsubsection{History.} 
We first recount the sequential Knuth shuffle algorithm \cite{durstenfeld-shuffle}. The algorithm builds a random permutation by iterating from the end of the array, and swaps the element at index $i$ with the element at index $j$, where $j$ is chosen uniformly between $1$ and $i$.

% \begin{algorithm2e} [ht] 
% \SetKwBlock{DoParallel}{do in parallel}{end}
% \caption{Sequential Knuth-shuffle}\label{seq-Knuth-shuffle}
% \For {$i \leftarrow n$ to $1$} {
%     $h \xleftarrow{R} \{1,\dots,i\} $\\
%     swap($A[i]$, $A[h]$)
% }
% \end{algorithm2e}

In \cite{highly-parallel}, it was shown that this algorithm is highly parallel by using reservations to perform swaps in a conflict free manner. In the recent work \cite{2021pip}, the authors observe that this parallelized version of Knuth-shuffle satisfies the so called `decomposable property'. In light of this, they give an algorithm that randomly shuffles a given array of size $n$ by iteratively processing chunks of $n^{1 - \epsilon}$ swaps in parallel. The result is a work-efficient relaxed PIP algorithm for random permutation. Importantly, while the algorithm achieves work-efficiency, both the auxiliary memory usage $O(n^{1-\epsilon})$ and the span $O(n^\epsilon \log^2 n)$ are poor. The algorithm we propose improves on these aspects by employing the buffer techniques given in Section \ref{restorable-buffers}.

\subsubsection{Overview.} At a high level, our strategy is to set up buffers to support the decomposability of parallel Knuth-shuffle; this allows us to process much larger chunks than the relaxed PIP algorithm, which in turn leads to a polylogarithmic depth algorithm. In reality, it is not so simple, and the aforementioned procedure only partially works. Instead, the algorithm operates in two stages.

\textit{Stage 1.} In a suffix of the array, we set up a restorable buffer. Using the auxiliary from the buffer, we perform a modification of the relaxed PIP algorithm mentioned previously to shuffle the entire array excluding the buffer. Restoring the buffer then concludes stage 1.

\textit{Stage 2.} In this stage, we set up an adjustable buffer in the prefix of the random permutation already built. This buffer supports simulating reads and writes on the underlying input, so we may process the remaining elements as in stage 1, where we perform simulated swaps when interacting with the buffer. After all elements have been processed, we restore the buffers and randomly swap each of the pairs that are used for encoding.

It is not obvious why this results in a correct algorithm. In using the buffer, we perform swaps on the input, thereby destroying the uniformly random shuffle built in the previous stage. The key insight we make is that performing certain swaps keeps the shuffle close to uniform, which allows us to recover a uniform shuffle at the end of the procedure.

The other complication is that reads and writes on the input underlying the buffer must abide by the concurrency requirements the buffer demands. This requires a modification of how swaps are handled, but we will see that it is not hard to do.

\subsection{The Algorithm}

We first consider the parallel Knuth-shuffle algorithm given in \cite{highly-parallel}, which is Algorithm \ref{parallel-Knuth-shuffle} with parameters $r = 1, s = n$.

\begin{algorithm2e} [ht] 
\SetKwBlock{DoParallel}{do in parallel}{end}
\SetKwFor{ParFor}{parallel for}{do}{endfch}
\SetKwFor{ParForEach}{parallel foreach}{do}{endfch}
\caption{Parallel Knuth-shuffle$(A, r, s)$\Comment{Process swaps at indices $[r,s]$}}\label{parallel-Knuth-shuffle} 
$R \gets [-1,\dots,-1], S \gets [(-1,-1),\dots,(-1,-1)]$ \Comment{Initialize reservation and swap arrays} \\
\ParFor{$i \gets s$ to $r$}{
    $H[i] \xleftarrow{R} \{1,\dots,i\}$ \\
    $S[i] \gets (i,H[i])$
}
\While{$S$ is nonempty}{
    \ParForEach{swap $(s,H[s])$ in $S$}{
        $R[s] \gets \max(R[s],s)$ \\
        $R[H[s]] \gets \max(R[H[s]],s)$
    }
    \ParForEach{swap $(s,H[s])$ in $S$}{
        \If{$R[s] = s$ and $R[H[s]] = s$} {
            swap$(A[s],A[H[s]])$ and mark swap as finished
        }
    }
    Filter unfinished swaps from $S$ and reset $R$
}
\end{algorithm2e}

The main idea is that many of the swaps being made in the sequential Knuth-shuffle are independent of each other. As such, using reservations, we can enforce many swaps to happen simultaneously, and the resulting swaps that are made will be done in an order equivalent to that of the sequential Knuth-shuffle. The relaxed PIP algorithm of \cite{2021pip} utilizes the observation that for any $k$, one can instead carry out the algorithm in $n/k$ sequential rounds, where the $i$th round handles the swaps made by the elements from indices $n-ik$ to $n-(i+1)k$ (calls Parallel Knuth-shuffle$(A,n-(i+1)k,n-ik)$). To lower the memory footprint, reservations are handled using a hash-table of size $2k$. This algorithm uses $O(k)$ auxiliary memory. We will call this modification the $k$-Knuth-shuffle. The analysis in \cite{2021pip} holds for any $k$ to show that the expected work is $O(n)$ and the span is $O(n/k \log^2 n)$ \whp. Taking $k$ to be $O(n^{1 - \epsilon})$ gives the relaxed PIP algorithm.

One point that we remark on is that the parallel Knuth-shuffle algorithm satisfies stronger coherence properties then simply being decomposable. It turns out that the swaps can be executed in any order, and a uniformly random permutation will be generated regardless \cite{rand-perm-shared-memory}. The permutation theory we develop for the analysis of our algorithm also provides a proof of this fact (Corollary \ref{swap-corollary}). As a consequence, we may split up the swaps in any manner we desire.
% This comes into play for our algorithm as we will first randomly permute a prefix of the input, and only after will we process the remaining swaps.

Before we give the exact procedure, we briefly recall from Section \ref{restorable-buffers} the notion of buffers and the operations they support. A buffer with fixed auxiliary space of size $n/\log^2 n$ can be constructed on any contiguous range of the input of size $n/\log^2n + O(n/\log n)$. The auxiliary portion of this buffer supports reads and writes of $\log n$-bit words. Moreover, if we take the buffer to be adjustable, the underlying input of the buffer can be read and written (called simulated reads and writes). Performing these operations on the input underlying the auxiliary and encoding portions incurs $O(\log n)$ work and $O(1)$ work respectively. Moreover, auxiliary operations and encoding operations must be done disjointly.

We now describe the algorithm in detail. In the first stage, we set up a buffer with $O(n/\log^2 n)$ auxiliary, which can be done in an $O(n/\log n)$ suffix of the array. Using the buffer auxiliary space, we perform the $(n/\log^2 n)$-Knuth-shuffle to process the swaps from the prefix of the array excluding the buffer. This randomly shuffles the first $n-O(n/\log n)$ elements. We then restore the buffer.

In the second stage, we set up an adjustable buffer with $O(n/\log^2 n)$ auxiliary, this time in an $O(n/\log n)$ prefix of the array. We carry out the remaining swaps using a modified $n/\log^2 n$-Knuth-shuffle. The modification is necessary to accommodate the concurrency demands of the adjustable buffer. First of all, any swap involving the buffer auxiliary portion is made using simulated reads and writes and analogously for swaps involving the encoding portion. Second, the way reservations are made changes: if a swap targets an encoding pair, the swap must reserve both indices of that pair. Third, the execution of reserved swaps are now done in two steps. The reserved swaps which target indices in the auxiliary portion of the buffer are first performed in parallel, followed by the parallel execution of the other reserved swaps. After this modified $n/\log^2 n$-Knuth-shuffle finishes, we restore the buffer and, in parallel, uniformly randomly transpose each pair from the encoding portion of the restored buffer. This concludes the algorithm.

We now analyze the complexity of our random permutation algorithm. The first stage is $O(n)$ expected work and $O(\poly)$ span \whp. The buffer takes $O(n)$ work and $O(\log n)$ span to set up and restore. The remaining work and span are determined by the $n/\log^2 n$-Knuth-shuffle using the auxiliary provided by the buffer. As a buffer is used for auxiliary space, the work and span are identical to that of the normal $n/\log^2 n$-Knuth-shuffle, which is $O(n)$ expected work and $O(\poly)$ span \whp. For the second stage, the buffer set-up and restoration cost is identical to the first stage. The modification to the $n/\log^2 n$-Knuth-shuffle of processing the remaining elements does not change the complexity. However, simulated auxiliary operations may now be performed, but this can contribute at most an additional multiplicative $\log n$ factor (since simulated reads and writes are $O(\log n)$ work). As we are only processing $O(n/\log n)$ elements, we conclude that the modified $n/\log^2 n$-Knuth-shuffle contributes $O(n)$ expected work and $O(\poly)$ span \whp. Finally, the random transpositions is a $O(n/\log n)$ work and $O(\log n)$ span operation. Altogether, we conclude that the random permutation algorithm is $O(n)$ work in expectation and $O(\poly)$ depth \whp.

\subsection{Correctness}\label{correctness}

We now prove the correctness of the algorithm. The essential idea is to view random swaps on the input as random variables valued in the permutation group of the indices $\{1,\dots,n\}$. The input starts as the constant random variable whose only value is the identity permutation. Then executing some random swap on indices of the input is interpreted as multiplying by the corresponding random permutation. Under this perspective, the correctness of the algorithm is equivalent to the random variable given by the product of all swaps being uniformly distributed. The most important random swaps being made in every variant of Knuth shuffle is the random permutation given by the transposition $(I_i \ i)$, where $I_i$ is uniformly sampled from $\{1, \dots, i\}$. Due to its importance, we will call this random swap $S_i$. The correctness of the sequential Knuth-shuffle is then simply the assertion that
\[
    S_n S_{n-1} \cdots S_1 \sim U_n,
\] where $U_n$ denotes the uniformly random permutation on $\{1, \dots, n\}$. Similarly, the parallel Knuth-shuffle also produces a random permutation equivalent to $S_n S_{n-1} \cdots S_1$; this is because the way in which reservations are made guarantees that the order of the swaps in parallel is consistent with the sequential order.

In our algorithm, the buffer in the second stage adds additional complexity due to simulated write operations. However, each of these additional swaps necessitated by the simulated writes have the property that they are valued in permutations which only transpose disjoint encoding pairs. We say such a permutation is an \textit{encoder}. Let $E$ be $\prod_p E_p$, ranging over all encoding pairs $p$, where $E_p$ uniformly randomly transposes the indices of the pair $p$. We call $E$ the \textit{uniform encoder}.

Our algorithm produces a random permutation equivalent to the following product of random permutations
\[
    U_k Y' (S_{n} Y_n) (S_{n-1} Y_{n-1}) (S_{n-2} Y_{n-2}) \cdots (S_{k+1} Y_{k+1}) E.
\]
Here $U_k$ is the uniformly random permutation on indices $\{1,\dots,k\}$ built by the first stage of the algorithm. The random permutation $Y'$ corresponds to the buffer set-up in stage 2, and is an encoder. Each pair $S_j Y_j$ corresponds to the series of swaps performed in processing the swap $S_j$ using simulated writes. Each $Y_j$ is an encoder. We can write the permutation as a product in the displayed order again because of how reservations are made. Also note that $E$ is independent of all the other permutations appearing in the product and each $S_i$ is independent of all the permutations that precede it. We need to show that this product is equivalent to the uniform distribution. To start, we state four lemmas, all of which essentially follow from standard exposure arguments.

\begin{lem}\label{casework-rand-perm}
    The product $U_k E$ is identically distributed to $U_k$.
\end{lem}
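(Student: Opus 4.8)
The plan is to prove this by an exposure argument exploiting the left-/right-invariance of the uniform distribution on a group. The structural facts I would rely on are: (i) $U_k$ is uniformly distributed on the full symmetric group $S_k$ of permutations of $\{1,\dots,k\}$, since stage 1 uniformly shuffles the first $k = n - O(n/\log n)$ elements; (ii) $E = \prod_p E_p$ is itself valued in $S_k$, because it is a product of (possibly trivial) transpositions of encoding pairs, all of whose indices lie in the stage-2 prefix and hence in $\{1,\dots,k\}$; and (iii) $E$ is independent of $U_k$, as recorded in the discussion preceding the lemma.

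First I would condition on the value of $E$. Fix any encoder $\sigma \in S_k$ in the support of $E$. Because $E$ and $U_k$ are independent, the conditional law of $U_k$ given $E = \sigma$ is still uniform on $S_k$. On this event $U_k E = U_k \sigma$, and right multiplication by the fixed element $\sigma$ is a bijection of $S_k$ onto itself; hence $U_k \sigma$ is again uniform on $S_k$. Thus the conditional distribution of $U_k E$ given $E = \sigma$ is uniform for every $\sigma$, and averaging over the law of $E$ shows that $U_k E$ is unconditionally uniform on $S_k$, i.e.\ $U_k E \sim U_k$.

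I expect the only points requiring care — rather than a genuine obstacle — to be the two support/independence checks above: confirming that every encoding pair defining $E$ consists of indices within the first-stage range $\{1,\dots,k\}$ (so that $U_k E \in S_k$ and the bijection argument applies), and confirming that the coin flips defining $E$ are drawn independently of the randomness used in stage 1 that produces $U_k$. Both follow directly from the construction: the stage-2 adjustable buffer occupies an $O(n/\log n)$ prefix, which is contained in $\{1,\dots,k\}$ since $k = n - O(n/\log n)$, and each $E_p$ is an independent fair transposition performed after $U_k$ has been fixed. With these in hand, the invariance computation is immediate, and this is the simplest of the four lemmas.
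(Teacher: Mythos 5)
Your proposal is correct and takes essentially the same route as the paper: both arguments expose (condition) on the value of $E$, invoke the independence of $U_k$ from $E$, and then use invariance of the uniform distribution under multiplication by a fixed group element --- the paper writes this as $\Pr[U_k = \sigma\tau^{-1}] = 1/k!$, which is precisely your bijection observation. Your extra checks (that the stage-2 encoding pairs lie in $\{1,\dots,k\}$ and that $E$'s coin flips are independent of stage 1) are sound and consistent with the paper's setup.
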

\begin{proof}
    By exposing on $E$ and using the fact that $U_k$ is independent from $E$, we find
    \begin{align*}
        \Pr[U_k E = \sigma] &= \sum_{\tau} \Pr[U_k = \sigma \tau^{-1} \mid E = \tau] \Pr[E = \tau] \\
        &= \sum_{\tau} \Pr[U_k = \sigma \tau^{-1}] \Pr[E = \tau] \\
        &= \sum_{\tau} \frac{1}{k!}\Pr[E = \tau] = \frac{1}{k!} \sum_{\tau} \Pr[E = \tau] \\
        &= \frac{1}{k!} = \Pr[U_k = \sigma]. \qedhere
    \end{align*}
\end{proof}

\begin{lem}\label{ext-iid}
    Let $X, Y, Z$ be any random permutations. If $X$ is independent from both $Y$ and $Z$, and $Y \sim Z$, then $XY \sim XZ$ and $YX \sim ZX$.
\end{lem}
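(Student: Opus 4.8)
The plan is to prove both claims ($XY \sim XZ$ and $YX \sim ZX$) by a direct exposure argument on $X$, exactly mirroring the computation in the proof of Lemma \ref{casework-rand-perm}. The two claims are symmetric under passing to inverses or relabeling, so I would prove one carefully and remark that the other follows identically. Since the statement concerns random permutations valued in a finite group, I would work throughout with the probability mass functions $\Pr[\,\cdot = \sigma\,]$ for an arbitrary fixed permutation $\sigma$, and show the masses of $XY$ and $XZ$ agree.

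First I would fix an arbitrary permutation $\sigma$ and expose on the value of $X$, writing
\begin{align*}
    \Pr[XY = \sigma] &= \sum_{\tau} \Pr[Y = \tau^{-1}\sigma \mid X = \tau]\,\Pr[X = \tau].
\end{align*}
The crucial step is to use the independence of $X$ from $Y$ to drop the conditioning, turning the summand into $\Pr[Y = \tau^{-1}\sigma]\,\Pr[X = \tau]$. Then I would invoke the hypothesis $Y \sim Z$ to replace $\Pr[Y = \tau^{-1}\sigma]$ by $\Pr[Z = \tau^{-1}\sigma]$, and finally use the independence of $X$ from $Z$ to reintroduce the conditioning, arriving at $\sum_{\tau}\Pr[Z = \tau^{-1}\sigma \mid X = \tau]\,\Pr[X=\tau] = \Pr[XZ = \sigma]$. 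Since $\sigma$ was arbitrary, this gives $XY \sim XZ$.

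For the second claim $YX \sim ZX$, I would run the same exposure but condition on $X$ while peeling it off on the right: expose on $X = \tau$, so that $\Pr[YX = \sigma] = \sum_\tau \Pr[Y = \sigma\tau^{-1}]\,\Pr[X = \tau]$, again using independence of $X$ from $Y$, then swap $Y$ for $Z$ using $Y \sim Z$, then reassemble using independence of $X$ from $Z$. The only bookkeeping difference is whether the deterministic factor $\tau$ multiplies $\sigma$ on the left or the right, which is immaterial to the argument.

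I do not expect a genuine obstacle here: the lemma is a routine consequence of the definition of independence and equality in distribution, and the entire content is that one may condition, substitute equal distributions, and un-condition. The only point requiring a sliver of care is ensuring that independence is used in the correct direction each time (to remove the conditioning on the $Y$-side and to restore it on the $Z$-side), and that both independence hypotheses — $X$ from $Y$ and $X$ from $Z$ — are genuinely needed, one for each end of the substitution. Because everything is a finite sum over the permutation group, there are no convergence or measurability concerns to address.
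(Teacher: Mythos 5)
Your proposal is correct and follows essentially the same route as the paper's proof: exposing on $X$, using independence of $X$ from $Y$ to drop the conditioning, substituting via $Y \sim Z$, and using independence of $X$ from $Z$ to restore the conditioning. The only difference is cosmetic --- you spell out the right-multiplication case $YX \sim ZX$ explicitly, whereas the paper dismisses it as ``a symmetric argument.''
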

\begin{proof}
    Exposing on $X$, and using the independence of $X$, we deduce
    \begin{align*}
        \Pr[XY = \sigma] &= \sum_{\tau} \Pr[Y = \tau^{-1}\sigma \mid X = \tau] \Pr[X = \tau] \\
        &= \sum_{\tau} \Pr[Y = \tau^{-1}\sigma] \Pr[X = \tau] = \sum_{\tau} \Pr[Z = \tau^{-1}\sigma] \Pr[X = \tau] \\
        &= \sum_{\tau} \Pr[Z = \tau^{-1}\sigma \mid X = \tau] \Pr[X = \tau] = \Pr[XZ = \tau].
    \end{align*}
    This concludes that $XY \sim XZ$, and a symmetric argument yields $YX \sim ZX$.
\end{proof}

\begin{lem}\label{xyz-lemma}
    Let $X$ be any random permutation and $Y$ an encoder (which potentially depends on $X$), both independent from $E$. Then $X$ and $YE$ are independent. Moreover, $YE \sim E$, and thus $XYE \sim XE$.
\end{lem}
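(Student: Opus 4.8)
The plan is to exploit the group structure of encoders. Since the encoding pairs are fixed and pairwise disjoint, every encoder is determined by the subset of pairs it transposes, and composing two encoders corresponds to taking the symmetric difference of the associated subsets; hence the set of encoders forms an abelian group isomorphic to $(\mathbb{Z}/2)^m$, where $m$ is the number of encoding pairs. Under this identification the uniform encoder $E = \prod_p E_p$ is exactly the uniform distribution on this group, since each $E_p$ independently and uniformly decides whether to transpose pair $p$.

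First I would prove \emph{simultaneously} that $YE$ is independent of $(X,Y)$ and that $YE \sim E$, by exposing on the joint pair $(X,Y)$. The key point is that, although $Y$ may depend on $X$, the uniform encoder $E$ is fresh randomness independent of all that precedes it, hence independent of the joint pair $(X,Y)$. Conditioning on any value $(X,Y) = (\sigma,\upsilon)$, the conditional law of $E$ remains uniform on the encoder group, so $YE$ conditioned on this event is distributed as $\upsilon E$ with $E$ uniform. Because left multiplication by the fixed encoder $\upsilon$ is a bijection of the (abelian) encoder group, $\upsilon E$ is again uniform on the group. Thus the conditional law of $YE$ given $(X,Y)$ is always the uniform distribution on the encoder group and does not depend on the conditioning value. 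This yields at once that $YE$ is independent of $(X,Y)$ — in particular independent of $X$ — and that $YE \sim E$, establishing the first two claims.

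Finally I would combine these with Lemma \ref{ext-iid}. Having shown $X$ is independent of $YE$, and knowing by hypothesis that $X$ is independent of $E$, together with $YE \sim E$, Lemma \ref{ext-iid} applied with its $Y$ and $Z$ taken to be $YE$ and $E$ respectively gives $X \cdot (YE) \sim X \cdot E$, that is $XYE \sim XE$, as desired.

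The main obstacle is the dependence of $Y$ on $X$, which forbids analyzing the product $YE$ in isolation from $X$. The resolution is to expose on the full pair $(X,Y)$ rather than on $Y$ alone: since $E$ is independent of this pair, the bijectivity of left translation by $\upsilon$ can be applied after conditioning without disturbing the distribution of $E$, collapsing the conditional law of $YE$ to the $(X,Y)$-free uniform distribution.
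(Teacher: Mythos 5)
Your proof is correct and takes essentially the same route as the paper's: both arguments expose on the randomness preceding $E$, use that $E$ is uniform on the subgroup $H$ of encoders together with the fact that left translation by a fixed element of $H$ preserves the uniform distribution on $H$, and then obtain $XYE \sim XE$ via Lemma \ref{ext-iid}. Your only departures are cosmetic---you condition on the joint pair $(X,Y)$ at once, which cleanly unifies the paper's two separate computations (and makes explicit the joint independence of $E$ from $(X,Y)$ that the paper's proof also silently uses), and you observe that the encoder group is abelian and isomorphic to $(\mathbb{Z}/2)^m$, a fact neither proof actually needs.
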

\begin{proof}
    Let $H$ denote the subgroup of permutations which only transpose encoding pairs, so $E$ is uniformly distributed over $H$. Also $Y$ takes values in $H$, as $Y$ is an encoder, so $YE$ is also supported in $H$. It thus suffices to consider $\tau \in H$. Then by exposing on $Y$, we see that
    \begin{equation*}
        \begin{split}
        \Pr[YE = \tau \mid X = \sigma] &= \sum_{\gamma \in H}  \Pr[E = \gamma^{-1} \tau \mid X = \sigma, Y = \gamma] \Pr[Y = \gamma \mid X = \sigma] \\
        &= \sum_{\gamma \in H} \Pr[E = \gamma^{-1}\tau] \Pr[Y = \gamma \mid X = \sigma] = \frac{1}{|H|} = \Pr[E = \tau].
        \end{split}
    \end{equation*}
    and
    \begin{equation*}
        \begin{split}
        \Pr[YE = \tau] &= \sum_{\gamma \in H}  \Pr[E = \gamma^{-1} \tau \mid Y = \gamma] Y = \gamma] \Pr[Y = \gamma] \\ &= \sum_{\gamma \in H}  \Pr[E = \gamma^{-1} \tau \mid Y = \gamma] \Pr[Y = \gamma] = \frac{1}{|H|} = \Pr[E = \tau].
        \end{split}
    \end{equation*}
    We conclude the lemma.
\end{proof}

\begin{lem}\label{commuter}
    If $X$ is a random permutation which only acts nontrivially on indices $\{1,\dots,i-1\}$ and is independent of $S_i$, then $S_iX \sim X S_i$.
\end{lem}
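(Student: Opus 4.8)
The plan is to expose on $X$ and thereby reduce the claim to a deterministic statement about conjugating $S_i$ by a fixed permutation. Since $X$ acts nontrivially only on $\{1,\dots,i-1\}$, every permutation $\pi$ in its support fixes index $i$ (indeed every index $\geq i$). Conditioning on $X = \pi$ and using the independence of $X$ and $S_i$, the transposition $S_i = (I_i\ i)$ retains its distribution, with $I_i$ uniform on $\{1,\dots,i\}$. It therefore suffices to show, for each such fixed $\pi$, that $\pi S_i$ and $S_i \pi$ are identically distributed; integrating this conditional equality over $\pi$ against the law of $X$ then yields $X S_i \sim S_i X$.

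For the fixed-$\pi$ statement, the key tool is the conjugation identity for a transposition: for any $j$,
\[
\pi (j\ i) = (\pi(j)\ \pi(i))\,\pi = (\pi(j)\ i)\,\pi,
\]
where the last equality uses $\pi(i) = i$. Applying this with $j = I_i$ gives $\pi S_i = (\pi(I_i)\ i)\,\pi$. Because $\pi$ permutes $\{1,\dots,i-1\}$ among themselves and fixes $i$, its restriction to $\{1,\dots,i\}$ is a bijection of that set onto itself; hence $\pi(I_i)$ is again uniform on $\{1,\dots,i\}$ whenever $I_i$ is. Consequently $(\pi(I_i)\ i)\,\pi$ has the same distribution as $(I_i\ i)\,\pi = S_i \pi$, which is exactly $\pi S_i \sim S_i \pi$.

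The only care needed is bookkeeping: verifying that $\pi$ fixes $i$ (so that the conjugation collapses $\pi(i)$ to $i$) and that the uniform law of $I_i$ is preserved under the bijection $\pi$ restricted to $\{1,\dots,i\}$. Both are immediate from the hypothesis that $X$ moves only indices in $\{1,\dots,i-1\}$. I expect no genuine obstacle here; as with the preceding lemmas this is a standard exposure argument, and the entire substance is captured by the conjugation identity together with the invariance of the uniform distribution under relabeling by $\pi$.
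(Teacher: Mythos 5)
Your proof is correct and follows essentially the same route as the paper's: both expose on the randomness, apply the conjugation identity $\pi (j\ i) = (\pi(j)\ \pi(i))\pi$ with $\pi(i)=i$, and then use the fact that $\pi$ restricted to $\{1,\dots,i\}$ is a bijection so that the uniform law of $I_i$ is preserved. The only cosmetic difference is that you condition on $X$ first and phrase the reindexing as invariance of the uniform distribution, whereas the paper carries out a double exposure on $X$ and $S_i$ and reindexes the sum over $j$ directly; the mathematical content is identical.
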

\begin{proof}
    Note for any permutation $\sigma$ and transposition $(j \ i)$, we have $\sigma (j \ i) = (\sigma j \ \sigma i) \sigma$. But as $X$ is supported on permutations which fix $i$, and $I_i$ is uniformly random, exposing on $X$ and $S_i$ gives
    \begin{align*}
        \Pr[XS_i = \tau] &= \sum_{j \leq i} \Pr[X (j \ i) = \tau \mid S_i = (j \ i)] \Pr[S_i = (j \ i)] \\
        &= \frac{1}{i}\sum_{j \leq i} \Pr[X (j \ i) = \tau] = \frac{1}{i}\sum_{j \leq i} \sum_{\sigma \in \supp X} \Pr[\sigma (j \ i) = \tau] \Pr[X=\sigma] \\
        &= \frac{1}{i} \sum_{j \leq i} \sum_{\sigma \in \supp X} \Pr[(\sigma j \ i) \sigma = \tau]\Pr[X=\sigma] \\ &= \frac{1}{i} \sum_{\sigma \in \supp X} \sum_{j \leq i} \Pr[(\sigma j \ i) \sigma = \tau]\Pr[X=\sigma] \\
        &= \frac{1}{i}\sum_{j \leq i} \sum_{\sigma \in \supp X} \Pr[(j \ i) \sigma = \tau]\Pr[X=\sigma] = \Pr[(j \ i) X = \tau] \\
        &=\sum_{j \leq i} \Pr[(j \ i) X  = \tau \mid S_i = (j \ i)] \Pr[S_i = (j \ i)] = \Pr[S_i X = \tau]. \qedhere
    \end{align*}
\end{proof}

\begin{cor}\label{swap-corollary}
    For any $i, j$, we have $S_iS_j = S_jS_i$.
\end{cor}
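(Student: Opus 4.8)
The plan is to read this off directly from Lemma \ref{commuter}, since the corollary is essentially the special case of that lemma in which the ``generic'' commuting permutation $X$ is itself one of the swaps. By the symmetry of the statement in $i$ and $j$, I would assume without loss of generality that $j < i$; the case $i = j$ is trivial, as a random variable is identically distributed to itself, and here $=$ is understood as equality in distribution ($\sim$).

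First I would observe that $S_j = (I_j \ j)$ with $I_j \overset{R}{\gets} \{1,\dots,j\}$, so every permutation in the support of $S_j$ acts nontrivially only on indices in $\{1,\dots,j\} \subseteq \{1,\dots,i-1\}$. Setting $X = S_j$, this is exactly the condition that $X$ acts nontrivially only on $\{1,\dots,i-1\}$. Second, since $I_i$ and $I_j$ are drawn independently, the swaps $S_i$ and $S_j$ are independent, so $X$ is independent of $S_i$. These are precisely the two hypotheses of Lemma \ref{commuter}, and applying the lemma with $X = S_j$ gives $S_i S_j \sim S_j S_i$, which is the claim.

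I do not expect any genuine obstacle: the only content is checking the two hypotheses, both of which are immediate from the definition of $S_i$. The single point deserving a clause of care is the reduction $j < i$, which is harmless because the claimed identity is symmetric in the two indices, and once we fix $j < i$ it is $S_j$ (the swap on the smaller index) that must play the role of the commuting permutation $X$ and $S_i$ that plays the role of the distinguished swap in Lemma \ref{commuter}.
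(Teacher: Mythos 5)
Your proposal is correct and matches the paper's intent exactly: the corollary is stated immediately after Lemma \ref{commuter} with no separate proof precisely because it is the instantiation $X = S_j$ (for $j < i$) of that lemma, with the two hypotheses --- that $S_j$ acts only on $\{1,\dots,j\} \subseteq \{1,\dots,i-1\}$ and that $S_j$ is independent of $S_i$ --- holding by definition of the swaps. Your care about the WLOG reduction and about reading the equality as equality in distribution is appropriate and consistent with the paper's usage.
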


We now prove the desired result inductively. Let $X^{l}$ denote the product that $S_nY_n \cdots S_{l}Y_{l}$ and let $S^l$ denote the product $S_l \cdots S_{k+1}$ (we will consider $S^k$ to be trivial). We show that 
\[
    X^{l+1} E S^l \sim X^{l+2} E S^{l+1}.
\]
Consider that $X^{l+1} E S^l = X^{l+2} S_{l+1} Y_{l+1} E S^l$. Note that $S^{l}$ is independent from every other term in this product. By Lemma \ref{xyz-lemma}, we have $X^{l+2} S_{l+1} Y_{l+1} E \sim X^{l+2} S_{l+1} E$. Moreover, by Lemma \ref{commuter}, we have $S_{l+1} E \sim E S_{l+1}$, and both $S_{l+1} E$ and $E S_{l+1}$ are independent from $X^{l+2}$, so Lemma \ref{ext-iid} applies to get $X^{l+2} S_{l+1} E \sim X^{l+2} E S_{l+1}$. Finally, as $S^l$ is independent from everything, we apply Lemma \ref{ext-iid} to conclude that \[X^{l+1} E S^l \sim X^{l+2} S_{l+1} E S^l \sim X^{l+2} E S_{l+1} S^l = X^{l+2} E S^{l+1}.\]

As our original permutation was $U_k Y' X^{k+1} E S^k$, we conclude that it is equivalent to $U_k Y' E S^{n}$. As $Y'$ is an encoder, we get that $Y' E \sim E$, and also $U_k Y' E S^{n} \sim U_k E S^{n}$. Now $U_k$ is uniformly distributed on the permutations acting on the first $k$ indices, which is independent from $E$, which acts on the first $k$ indices. As $U_k E \sim U_k$ by Lemma \ref{casework-rand-perm}, we see that $U_k Y' E S^{n} \sim U_k S^n$. But we can write $U_k$ as $S_k S_{k-1} \dots S_1$, and then $U_k S^n$ is the product $S_k \dots S_1 S_n \cdots S_{k+1}$. Now, Corollary \ref{swap-corollary} says this product is equivalent to $S_n \cdots S_1$, which is exactly the same as the random permutation produced by the sequential Knuth-shuffle. We thus conclude the correctness of the algorithm.
\section{An Application to Integer Sorting}\label{add-apps}

In this section, we demonstrate how our techniques can apply to other problems, namely integer sorting. The problem of sorting fixed-length integers is fundamental in computer science and has been studied extensively. We outline a procedure to solve this problem, where the max integer $r$ is $O(n^c)$, in-place using the techniques we have developed.

We first partition the array in parallel in-place across the median using the algorithm given in \cite{kuszmaul2020inplaceparallelpartitionalgorithmsusing}. Note this algorithm can be used initially to find the median with an in-place parallel quick select. Now we do the same for both halves in parallel, and so on, until each slice is of size $O(n/\log n)$. This requires $O(\log \log n)$ rounds where each round is linear work, so this step takes $O(n \log \log n)$ work. This reduces the problem to $O(\log n)$ disjoint subproblems, each of size $O(n/\log n)$. We now describe how to sort each subproblem by setting up buffers in the remaining subproblem to facilitate the use of a non-in-place MSD sorting algorithm.

To start, we process the first subproblem. In the space occupied by every other subproblem, we set up maximally sized restorable buffers. This gives $O(n/\log n)$ auxiliary space. We partition across medians again in the first problem until the subsubproblems have size less than the auxiliary (only a linear amount of work needs to be done to achieve this). Next, we use the non-in-place MSD radix sort algorithm given by \cite{10.1145/3627535.3638483} to sort this subproblem. With this subproblem solved, we can now set up a restorable buffer within it. We then restore the buffer in the second subproblem and repeat the procedure we used to solve the first. This continues until we have solved all subproblems. The span contributed by this step is $O(\poly rn)$ and the work is $O(n\sqrt{\log r})$. In total the work and span for this algorithm is $O(n\sqrt{\log n} + n\log\log n)$ and $O(\poly n)$ respectively.

% \subsection{Graph Algorithms}

% For graph algorithms, we consider the graph to be given in CSR format. This means the graph consists of an offset sequence $O$ followed by an edge sequence $E$, where the edges of vertex $i$ lie between $E[O[i]]$ and $E[O[i+1]]$. As the offset sequence is sorted, it is amenable to encoding. Furthermore, using in place partition, we may arrange the edge lists so that the induced subgraph on the vertices $k n/\log n, \dots, n$ is formed by a detectable suffix of the corresponding edge lists. 
\section{Implementation Optimizations} We implement our parallel in-place merging algorithm from Section \ref{merge section}. For simplicity of implementation, we assume that our input is divisible by a given block size. We now discuss several optimizations that improve practical performance in our in-place merge algorithm.
\subsection{Enlarging Blocks} In Section \ref{tphasemerge}, we chose a block size of $O(\log n)$ to fit the strong PIP model. However, for large inputs, the parallel loops used in the EndMerge algorithm (Algorithm \ref{EndMerge}) incur nontrivial scheduling overhead, since the number of blocks, and hence the number of iterations, is $n / \log n$. Moreover, encoding and decoding the values stored in each block becomes costly when there are a large number of blocks. To improve performance, we enlarge the block size. Empirically, in inputs exceeding 100 million elements, we found that setting the block size to 4000 yielded the best performance. While increasing the block size does raise the per-thread stack usage, we observed that the total space overhead remained negligible. For example, on a 96-core machine, using a block size of 4000 required only ${\sim}158$ KB of additional stack space (total across all threads), which we believe is reasonable in practice.
\subsection{Precomputing Randomness} Each iteration of EndMerge computes a random bit for each active block. Since the random bits used by a single block across different iterations are independent, we can precompute them. For example with 32-bit integers, in the initialization loop of EndMerge, we simply compute and encode a random 32-bit integer for each block. Then, in iteration $i$, each block simply retrieves its random bit from the $i$-th bit in that precomputed value.

The main benefit of this approach is that it removes one parallel loop per iteration of the while-loop. Instead of generating random bits during each iteration, all randomness is generated once upfront. Consequently, the scheduling overhead of those extra parallel loops is avoided, resulting in a measurable speedup on large inputs.
\subsection{Coarsening EndMerge} To reduce the redundant work done by non active blocks in EndMerge, we perform a fixed number of iterations of the while loop and complete the remaining swaps using a naive symmetry breaking. This has the additional benefit of removing the need for the Done function. 

To complete the remaining swaps, we do parallel loop over all blocks, and for each active block $X[i]$, we traverse along end-sorted-position pointers until returning back to index $i$. If during this process $X[i]$ traverses to a block $j$ such that $j < i$, then $X[i]$ marks itself. In each remaining cycle, one block will not mark itself. This block becomes the `leader' and processes its cycle sequentially. We mark blocks by encoding an additional bit per block, or reusing a previously encoded bit. 

We empirically determined that setting the number of iterations to $O(\log N)$ ensures that each iteration removes a large fraction of the remaining active blocks so that the overhead of completing the remaining swaps is negligible.

\subsection{Using a Buffer} Recall that in each iteration, every active block $X[i]$ reads the coin flip of its target whose index is given by $T_b[i]$. Reading $T_b[i]$ is an $O(\log n)$ work operation, and while it doesn't affect the overall theoretical complexity, a non-trivial number of blocks will perform this operation repeatedly throughout EndMerge. To avoid this, each block uses a restorable buffer to store $T_b[i]$. Concretely, the first $O(\log n)$-bits of the first element in each block is encoded using the next $O(\log n)$ pairs and then overwritten with $T_b[i]$. Then to read $T_b[i]$ in each iteration, $X[i]$ simply performs an array access is $O(1)$ time.

\subsection{On Removing Divisibility} As previously mentioned, our implementation assumes that input size is divisible by a given block size (e.g. 4000). Here we provide a sketch of a simple and practical strategy for removing this assumption.

The procedure described in Section \ref{removing divisibility} is nice in theory, but is unlikely to be practical. Instead, we allow ourselves to use the heap to a small degree. Let $b$ be a given block size. Letting $A$ and $B$ be our arrays, we consider the largest suffix of $A$ divisible by $b$, and the largest prefix of $B$ divisible by $b$. We then use the heap to extend the prefix of $A$ by elements $-\infty$ and the suffix of $B$ by $\infty$ to simulate a $b$-sized block. Thus we may now imagine that $A$ and $B$ are divisible by $b$. We run the same algorithm as before but with the modification that we keep track of which blocks currently contain the `imaginary' elements. As there can only ever be one such block (the $\infty$ elements will never move), doing this is easy. Further, this approach uses only $O(2b)$ heap-allocated memory which for reasonable $b$ is completely negligible.

\section{Experiments} \label{exp section} We implemented our parallel in-place merging algorithm in C++ and tested its performance. We use ParlayLib \cite{10.1145/3350755.3400254} to support fork-join parallelism.

\subsection{Experimental Setup} We run our experiments on a 96-core machine (192 hyper-threads) with $6 \times 2.10$ GHz Intel Xeon Platinum 8160 CPUs (each with $16.5$MB L3 cache) and $1.5$TB of main memory. We compile using the g++ compiler (version 11.4.0) with the -O2 flag. We compare our algorithm to the non-in-place merging implementation provided in the Problem Based Benchmark Suite (PBBS) \cite{10.1145/2312005.2312018}, which is a collection of highly optimized parallel algorithms and implementations.
\subsubsection{Input representation.} We test both implementations on $32$-bit fixed-width integers (i.e. the uint32\_t type in C++). We test input sequences that are randomly generated. 
% In Section \ref{add exp} we analyze performance on additional datasets. 
\subsubsection{Block size.} We use blocks of size 4000 which, for inputs less than 2 billion, provides a good trade off between the total number of blocks and the size of each block. 
% In section \ref{add exp} we analyze performance on different block sizes.

\subsection{Performance Analysis} Figure \ref{run time comparison} contains the running times of the in-place parallel merging (IP Merging) and PBBS merging implementations. Figure \ref{fig:relative-performance} displays the corresponding slow down of IP merge. On small input sizes (< 50M) the overhead of encoding techniques and generally the complexity of IP merge results in poor performance comparatively. However, for larger inputs, IP merge becomes competitive. This is likely due to its small memory overhead.

\subsubsection{Space usage.} We measure space usage using Valgrind's Massif tool. Table \ref{tab:ip-merge-mem} shows the total memory usage of IP merging on an input of 2 billon integers. We see that the auxiliary space usage of IP merging is negligible. We also note that the majority of the memory usage comes from the ParlayLib work-stealing scheduler (${\sim}10$ MB).
\begin{table}[ht]
\centering
\begin{tabular}{lr}
\hline
\textbf{Input Memory Size} & 1,073,741,824  \\
\textbf{Peak Total Memory} & 1,086,896,064  \\
\textbf{Overhead (\%)}     & 1.22\%                         \\
\textbf{Peak Stack Usage}  & 161,816        \\
\hline
\end{tabular}
\caption{Memory usage (in bytes) for IP Merging on input of size 2 billion.}
\label{tab:ip-merge-mem}
\end{table}

\subsubsection{Scalability.} IP merging has good scalability to both the number of threads and input sizes as seen in Figure \ref{run time comparison} and \ref{fig:scalability-comparison}.  

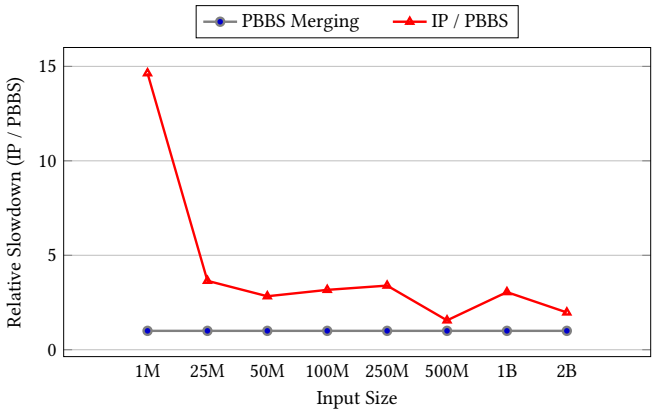
\begin{figure}[ht]
\scalebox{0.75}{%
  \begin{minipage}{\textwidth}
\begin{tikzpicture}
\begin{axis}[
    major x tick style = transparent,
    width=12cm,
    height=7cm,
    xlabel={Input Size},
    ylabel={Relative Slowdown (IP / PBBS)},
    ymajorgrids=true,
    enlarge x limits=0.2,
    symbolic x coords={1M,25M,50M,100M,250M,500M,1B,2B},
    xtick=data,
    legend cell align={left},
    legend style={
        at={(0.5,1.03)},
        anchor=south,
        legend columns=2,
        /tikz/every even column/.append style={column sep=0.5cm}
    },
]

% PBBS baseline (y=1 at each input size)
\addplot+[mark=*, very thick, color=gray] coordinates {
    (1M, 1)
    (25M, 1)
    (50M, 1)
    (100M, 1)
    (250M, 1)
    (500M, 1)
    (1B, 1)
    (2B, 1)
};

% IP / PBBS ratio
% 1M:   1683 / 115     ~ 14.635
% 25M:  8179 / 2240    ~ 3.649
% 50M:  11941 / 4217   ~ 2.831
% 100M: 19741 / 6233   ~ 3.169
% 250M: 57307 / 16882  ~ 3.395
% 500M: 92210 / 59298  ~ 1.554
% 1B:   193711 / 63496 ~ 3.051
% 2B:   341541 / 173050~ 1.974
\addplot+[mark=triangle, very thick, color=red] coordinates {
    (1M, 14.635)
    (25M, 3.649)
    (50M, 2.831)
    (100M, 3.169)
    (250M, 3.395)
    (500M, 1.554)
    (1B, 3.051)
    (2B, 1.974)
};

\legend{PBBS Merging, IP / PBBS}

\end{axis}
\end{tikzpicture}
\caption{Relative performance (IP time / PBBS time) vs. input size. PBBS is baseline at y = 1.}
\label{fig:relative-performance}
\end{minipage}%
}
\end{figure}

\begin{figure}[ht]
\scalebox{0.75}{%
  \begin{minipage}{\textwidth}
\begin{tikzpicture}
\begin{axis}[
    major x tick style = transparent,
    width=12cm,
    height=7cm,
    xlabel={Threads},
    scaled y ticks = false,
    ylabel={Time (\(\mu s\))},
    ymajorgrids=true,
    enlarge x limits=0.2,
    % We treat the x-values as symbolic labels, spaced evenly:
    symbolic x coords={1,4,16,64,96,96HT},
    xtick=data,
    legend cell align={left},
    legend style={
        at={(0.5,1.03)},
        anchor=south,
        legend columns=2,
        /tikz/every even column/.append style={column sep=0.5cm}
    },
]

% PBBS merging data
% Threads: 1, 4, 16, 64, 96, 192
% Times:   3040850, 873144, 210963, 72052, 45981, 38324
\addplot+[mark=*, very thick, color=gray] coordinates {
    (1,   3040850)
    (4,   873144)
    (16,  210963)
    (64,   72052)
    (96,   45981)
    (96HT,  38324)
};
\addlegendentry{PBBS Merging}

% IP merging data
% Threads: 1, 4, 16, 64, 96, 192
% Times:   4539910, 1196750, 391647, 202730, 126598, 106874
\addplot+[mark=triangle, very thick, color=red] coordinates {
    (1,   4539910)
    (4,   1196750)
    (16,  391647)
    (64,  202730)
    (96,  126598)
    (96HT, 106874)
};
\addlegendentry{IP Merging}

\end{axis}
\end{tikzpicture}
\caption{Running times for PBBS Merging and IP Merging (500M elements) across varying thread counts.}
\label{fig:scalability-comparison}
\end{minipage}
}
\end{figure}
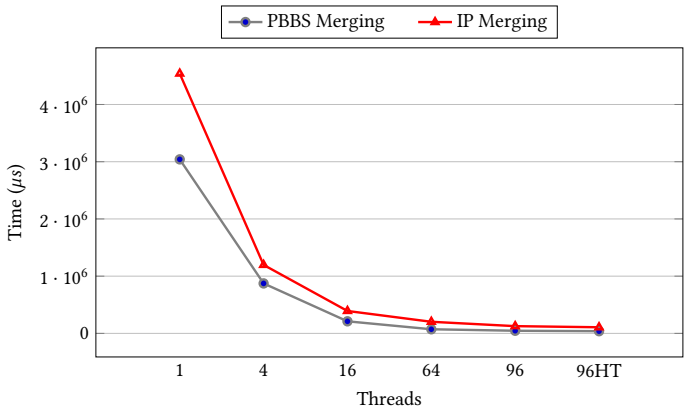

% \subsection{Additional Experiments}\label{add exp}

% \textbf{Varying block size}.

% \textbf{Varying sequence type.}

\section{Conclusion.} In this paper, we provided work-efficient strong PIP algorithms for the problems of merging two sorted sequences and randomly shuffling a sequence. Moreover, we propose a general encoding technique in restorable buffers that we believe has applications beyond what we've done in this paper. Ultimately, in this paper we make assumptions that we have to ability to modify the input and further that the input has an initial structure that can be exploited to gain access to more bits (i,e, bit-stealing). While the former assumption is quite reasonable, there are a number of simple problems that do no satisfy the latter assumption. Take for example the problems of list ranking and stable partition (note that the algorithm given in \cite{kuszmaul2020inplaceparallelpartitionalgorithmsusing} is unstable). Fundamentally, the techniques we used in this paper seemingly do not apply to these problems and thus devising work-efficient strong PIP algorithms for list ranking and stable partition is an interesting line of future work.

\begin{acks}
We thank Laxman Dhulipala for letting us use his group's machine for our experiments and for insightful discussions which initiated this paper.
\end{acks}

\bibliography{refs.bib}{}
\bibliographystyle{plain}

\end{document}